\numberwithin{equation}{section}                % must call it before cleveref
\def\csname ver@etex.sty\endcsname{3000/12/31}
\let\originalleft\left
\let\originalright\right
\renewcommand{\left}{\mathopen{}\mathclose\bgroup\originalleft}
\renewcommand{\right}{\aftergroup\egroup\originalright}
\renewcommand*{\eqref}[1]{\hyperref[{#1}]{\textup{\tagform@{\ref*{#1}}}}}
\newcommand*{\eqdef}{\mathrel{\vcenter{\baselineskip0.5ex \lineskiplimit0pt\hbox{.}\hbox{.}}}=}
\newtheorem*{acknowledgment}{Acknowledgment}
\newtheorem{theorem}{Theorem}[section]
\newtheorem{lemma}[theorem]{Lemma}
\newtheorem{corollary}[theorem]{Corollary}
\newtheorem{remark}[theorem]{Remark}
\newtheorem{definition}[theorem]{Definition}
\crefname{theorem}{Theorem}{Theorems}                 % label for Theorems
\crefname{main}{Main Theorem}{Main Theorems}          % label for the Main Theorems
\crefname{lemma}{Lemma}{Lemmas}                       % label for Lemmas
\crefname{corollary}{Corollary}{Corollaries}          % label for Corollaries
\crefname{ineq}{inequality}{inequalities}             % label for inequalities
\crefname{diag}{diagram}{diagrams}             % label for diagrams
\crefname{cond}{condition}{conditions}                % label for conditions
\crefname{table}{Table}{Tables}                       % label for Tables
\crefname{hypothesis}{Hypothesis}{Hypotheses}            % label for Hypotheses
\crefname{remark}{Remark}{Remarks}                    % label for Remarks
\crefname{definition}{Definition}{Definitions}           % label for Definitions
\def\id{\mathbbm{1}}
\def\cx{\mathbbm{C}}
\def\rl{\mathbbm{R}}
\def\N{\mathbbm{N}}
\def\Z{\mathbbm{Z}}
\def\bH{\mathbb{H}}
\def\cB{\mathcal{B}}
\def\cC{\mathcal{C}}
\def\cE{\mathcal{E}}
\def\cH{\mathcal{H}}
\def\cL{\mathcal{L}}
\def\cM{\mathcal{M}}
\def\cV{\mathcal{V}}
\def\Ad{\mathrm{Ad}}
\def\rd{\mathrm{d}}
\def\dist{\mathrm{dist}}
\def\dom{\mathrm{dom}}
\def\End{\mathrm{End}}
\def\Hom{\mathrm{Hom}}
\def\Jac{\mathrm{Jac}}
\def\PU{\mathrm{PU}}
\def\SU{\mathrm{SU}}
\def\supp{\mathrm{supp}}
\def\tr{\mathrm{tr}}
\def\rU{\mathrm{U}}
\def\vol{\: \mathrm{vol}}
\def\rd{\mathrm{d}}
\def\Hol{\mathrm{Hol}}
\def\cpt{\mathrm{cpt}}
\def\irr{\mathrm{irr}}
\def\rhoeq{\left[ \varrho \right]}
\def\nablaeq{\left[ \nabla \right]}
\def\whnabla{\widehat{\nabla}}
\def\whDelta{\widehat{\Delta}}
\def\whHol{\widehat{\Hol}}
\def\fB{\mathfrak{B}}
\def\fC{\mathfrak{C}}
\def\fU{\mathfrak{U}}
\title[Hyperbolic Bloch transforms and surface groups]{On the hyperbolic Bloch transform}
\date{\today}
\keywords{noncommutative Bloch transform, hyperbolic band theory, hyperbolic lattice, quantum condensed matter, hyperbolic geometry, surface groups}
\subjclass[2020]{35Q40, 42B37, 43A30, 43A85, 74E15, 81Q35}
\author{\'Akos Nagy}
\address[\'Akos Nagy]{Department of Mathematics, University of California, Santa Barbara}
\urladdr{\href{https://akosnagy.com/}{akosnagy.com}}
\email{\href{mailto:contact@akosnagy.com}{contact@akosnagy.com}}
\author{Steven Rayan}
\address[Steven Rayan]{Centre for Quantum Topology and Its Applications (quanTA) and Department of Mathematics \& Statistics, University of Saskatchewan}
\urladdr{\href{https://researchers.usask.ca/steven-rayan}{researchers.usask.ca/steven-rayan}}
\email{\href{mailto:rayan@math.usask.ca}{rayan@math.usask.ca}}
\begin{document}

\begin{abstract}
   Motivated by recent theoretical and experimental developments in the physics of hyperbolic crystals, we study the noncommutative Bloch transform of Fuchsian groups that we call the hyperbolic Bloch transform.

   First, we prove that the hyperbolic Bloch transform is injective and ``asymptotically unitary'' already in the simplest case, that is when the Hilbert space is the regular representation of the Fuchsian group, $\Gamma$. Second, when $\Gamma \subset \mathrm{PSU} (1, 1)$ acts isometrically on the hyperbolic plane, $\mathbb{H}$, and the Hilbert space is $L^2 \left( \mathbb{H} \right)$, then we define a modified, geometric Bloch transform, that sends wave functions to sections of stable, flat bundles over $\Sigma = \mathbb{H} / \Gamma$ and transforms the hyperbolic Laplacian into the covariant Laplacian.
\end{abstract}

\maketitle

\section{Introduction}

Bloch's theorem is a tool at the border of analysis, representation theory, and mathematical physics that is crucial for understanding the propagation of wave functions in a periodic medium under a potential that respects the symmetries of the system. The theorem applies when the medium under consideration is a tiling of Euclidean space (of any dimension) by copies of a regular, compact fundamental cell, which is an idealized model for any crystalline material. The theorem guarantees that a general wave function can be expressed in a basis of quasi-periodic waves, each of which is determined by periodic boundary conditions up to a choice of phase factor. These phase factors are given by $\rU (1)$-representations of the discrete, abelian translation group of the tiling --- equivalently, of the fundamental group of the torus $\mathbb{T}$ realized as the quotient of Euclidean space by translations. The space of these representations is the \emph{Brillouin zone} of the medium. In physical language, it is a space of ``crystal momenta''.

While the classical Bloch's theorem can be applied to any periodic physical problem, it has been the most influential in condensed matter physics where it describes the quantum-mechanical propagation of electrons in a periodic potential. It underlies the electronic band theory that forms the modern basis for the study of conductivity in solid-state physics. In the presence of an atypical dependence of fundamental quantum numbers, e.g. charge, spin, orbit, and lattice, thereby leading to so-called \emph{topological phases}, Bloch's theorem and electronic band theory facilitate a topological characterization of insulators, metals, and semimetals.

Very recently, interest has grown in $2$-dimensional electronic materials modeled on hyperbolic lattices rather than Euclidean ones, due in part to the artificial engineering of photonic hyperbolic materials \cite{KFH19}, the establishment of a mathematical hyperbolic band theory \cites{MR21,MR22}, and subsequent experiments involving negative curvature in an electronic context. These include the recent electric-circuit realization of a hyperbolic drum \cite{DRUM21}, whereby signals are observed to travel along what are apparently hyperbolic geodesics. Further theoretical and mathematical works emerging in hyperbolic lattices and band theory include \cites{ikeda_hyperbolic_2021,aoki_algebra_2021,boettcher_crystallography_2021,stegmaier_universality_2021,attar_selberg_2022,KR22,bienias_circuit_2022}.

As initiated in \cite{MR21}, one may consider a Fuchsian group $\Gamma \subset \mathrm{PSU} (1,1)$ acting on the hyperbolic plane $\bH$ such that $\Sigma \eqdef \bH / \Gamma$ is a compact, connected genus-$g$ algebraic curve for some $g \geqslant 2$. This gives rise to a complex momentum space parametrizing $\rU (1)$-representations of $\Gamma$. This space is the \emph{Jacobian} variety $\Jac \left( \Sigma \right) \cong \rU (1)^{2g}$. The Jacobian is a $g$-dimensional complex torus that is precisely the moduli space of holomorphic line bundles on $\Sigma$. This generalizes the Euclidean picture in which the complex momentum space is the dual elliptic curve, which is exactly the Jacobian in the $g = 1$ setting. In the present paper, we do not appeal to the specific complex structure on the Jacobian and use the term to refer to the underlying topological space $\rU (1)^{2g}$ without confusion. In other words, we need only regard the Jacobian as the topological space of $\rU (1)$-representations of $\Gamma \cong \pi_1 \left( \Sigma \right)$. The persistence of a compact space of crystal momenta begs the question of whether Bloch's theorem holds. For finite hyperbolic lattices with symmetric potentials, it is shown in \cite{MR22} that there is indeed a Bloch decomposition of general wave functions and that the decomposition involves higher rank irreducible representations of the fundamental group, which can be interpreted in algebraic geometry as moduli spaces of stable holomorphic bundles. In other words, there exist ``higher'' nonabelian Brillouin zones in the hyperbolic setting. Heuristically, by taking the ``large $N$ limit'' as the number of cells, $N$, goes to infinity, one obtains a Bloch's theorem in which all possible higher Brillouin zones appear.

We take a moment to recall the heuristics of Bloch transforms. Let $G$ be a group with Haar measure, $\mu_H$, and assume that the canonical action of $G$ on the Hilbert space $\ell^2 \left( \Gamma \right)$ is such that we have the direct integral decomposition
\begin{equation}
   \ell^2 \left( \Gamma \right) = \int\limits_{\widehat{G}}^\oplus \cH_{\rhoeq} \rd \widehat{\mu} \left( \rhoeq \right),
\end{equation}
where $\widehat{G}$ is the space of irreducible, finite dimensional representations of $G$, $\cH_{\rhoeq}$ is the representation space of $\varrho$, and $\widehat{\mu}$ is the canonical measure on $\widehat{G}$. Then, the Bloch transform of $\uppsi \in \ell^2 \left( \Gamma \right) \otimes \cH_0$ at $\rhoeq \in \widehat{G}$ (where $\cH_0$ is an auxiliary Hilbert space, e.g. the space of square integrable functions over a cell of a lattice) is, informally,
\begin{equation}
   \cB \left( \uppsi \right) \left( \rhoeq \right) = \int\limits_G \uppsi \left( \gamma \right) \varrho \left( \gamma \right) \rd \mu_H \left( \gamma \right) \in \End \left( \cH_{\rhoeq} \right).
\end{equation}
This construction has many advantages, in particular that $\Gamma$-periodic operators on $\cH$ transform into operators on $\cH_{\rhoeq}$. However, it is generally hard to say anything in the algebraic/analytic properties of $\cB \left( \uppsi \right)$. Moreover, the usual interpretation of the Bloch transform is that $\cB \left( \uppsi \right) \left( \rhoeq \right)$ is the \emph{$\rhoeq$-quasi-periodic} (or \emph{isotypical}) component of $\uppsi$, that is
\begin{equation}
   \cB \left( \gamma \cdot \uppsi \right) \left( \rhoeq \right) = \varrho \left( \gamma \right)^{- 1} \cB \left( \uppsi \right) \left( \rhoeq \right).
\end{equation}
This can then be used to study spectra of operators, such as Schr\"odinger operators on Euclidean spaces, as it allows us to decompose hard-to-understand partial differential operators on noncompact domains into easier-to-understand operators on compact domains with quasi-periodic boundary conditions. Unfortunately, in general, it is not \emph{a priori} clear whether $\cB$ is injective or how it interacts with the geometry of $\cH$. However, in sufficiently nice cases --- for example when $G$ is a finitely generated abelian group or is of type I --- $\cB$ will satisfy a Plancherel-type theorem, and so will be a unitary isomorphism of Hilbert spaces.

The goal of this paper is to rigorously study the analytic properties of the hyperbolic Bloch transform. More concretely, we prove that the hyperbolic Bloch transform is injective and that the Hilbert norm of $\uppsi$ can be recovered from it. We prove this by using a recent result of Magee in \cite{magee_random_II_2021}. In the case of the hyperbolic Bloch transform, we further refine this interpretation by modifying $\cB$ so that the resulting isotypical components can be viewed as sections of (stable, flat) Hermitian vector bundles over $\Sigma$. This construction uses Donaldson's theorem that states that $n$-dimensional irreducible representations of $\Gamma$ correspond bijectively to stable, flat, metric compatible connections on the Hermitian vector bundle $E = \Sigma \times \cx^n$.

\smallskip

\subsection*{Organization of the paper:} In \Cref{sec:group}, we introduce relevant ideas about the symmetry group $\Gamma$ and the hyperbolic analogue of the Brillouin zone, $\cM_\Gamma$. In \Cref{sec:abstract}, we define the abstract Bloch transform corresponding to $\Gamma$ and in \Cref{theorem:asymptotic_almost_unitarity} we prove that is injective and ``almost unitary''. In \Cref{sec:HBT}, we define the hyperbolic Bloch transform on $L^2 \left( \bH \right)$ and study its properties. Finally, in \Cref{sec:twisted}, we study the case when a periodic gauge field is present on the hyperbolic plane.

\smallskip

\begin{acknowledgment}
   \emph{The authors are grateful to Elliot Kienzle, Semyon Klevtsov, Michael Magee, Rafe Mazzeo, Joseph Maciejko, Gon\c{c}alo Oliveira, and Jacek Szmigielski for fruitful discussions at various stages of the work. The second-named author was partially funded by a Natural Sciences and Engineering Research Council of Canada (NSERC) Discovery Grant, a Canadian Tri-Agency New Frontiers in Research Fund (NFRF) Exploration Stream Grant, and a Pacific Institute for the Mathematical Sciences (PIMS) Collaborative Research Group Award. He also acknowledges the Institut de Recherche Math\'ematique Avanc\'ee (Strasbourg) and the organizers of the ``Quantum Hall Effect and Topological Phases'' conference (June 2022) for facilitating useful discussions with the participants. Both authors acknowledge the hospitality of the American Institute of Mathematics (San Jose), where final steps in this work were taken in July 2022 during the ``Geometry and Physics of ALX Metrics'' workshop that they co-organized together with Laura Fredrickson and Hartmut Wei\ss.}
\end{acknowledgment}

\bigskip

\section{Symmetry groups and measures}
\label{sec:group}

Fix $g \in \N \cap \left[ 2, \infty \right)$, the \emph{genus}, and let $\Gamma$ be the discrete group generated by the elements $\alpha_1, \alpha_2, \ldots, \alpha_g$ and $\beta_1, \beta_2, \ldots, \beta_g$, satisfying the multiplicative relation
\begin{equation}
   \left[ \alpha_1, \beta_1 \right] \left[ \alpha_2, \beta_2 \right] \cdots \left[ \alpha_g, \beta_g \right] = \id. \label{eq:relation}
\end{equation}
Let
\begin{equation}
   \cM_\Gamma \eqdef \bigcup\limits_{n \in \N_+}^\infty \overbrace{\Hom_{\irr} \left( \Gamma, \rU (n) \right) / \rU (n)}^{\cM_\Gamma^n \eqdef}.
\end{equation}
For any $\rhoeq \in \cM_\Gamma^n$, we define the dimension as $\dim \left( \rhoeq \right) = n$. Note that the dimension, regarded as a function on $\cM_\Gamma$, is not assumed to be bounded above.

Since $\rU (1)$ is abelian and every 1-dimensional representation is irreducible, we have that
\begin{equation}
   \cM_\Gamma^1 = \Hom_{\irr} \left( \Gamma, \rU (1) \right) / \rU (1) = \Hom \left( \Gamma, \rU (1) \right),
\end{equation}
which is isomorphic to the abelian group, $\Jac_g \eqdef \rU (1)^{2g}$, the \emph{Jacobian}, via the map
\begin{equation}
   \Hom \left( \Gamma, \rU (1) \right) \rightarrow \Jac_g; \: \lambda \mapsto \left( \lambda \left( \alpha_1 \right), \lambda \left( \beta_1 \right), \lambda \left( \alpha_2 \right), \lambda \left( \beta_2 \right), \ldots, \lambda \left( \alpha_g \right), \lambda \left( \beta_g \right) \right).
\end{equation}
This map is surjective because of \cref{eq:relation}.

\smallskip

We allow ourselves a quick digression about special unitary representations. By \cite{narashimhan_stable_1965}, the space $\Hom_{\irr} \left( \Gamma, \SU (n) \right)$ is diffeomorphic to the moduli space of stable holomorphic vector bundles of rank $n$ and degree zero over $\Sigma$. Using this fact, together with Yang--Mills theory, Atiyah and Bott showed in \cite{atiyah_yang_1983} that the (dense) smooth locus of $\Hom_{\irr} \left( \Gamma, \SU (n) \right)$ is equipped with the structure of a (finite volume) K\"ahler manifold (of complex dimension $n^2 \left( \mathrm{genus} \left( \Sigma \right) - 1 \right) + 1$), and this K\"ahler structure of canonical, up to a global factor. In \cite{goldman_symplectic_1984}, Goldman also introduces a symplectic form directly on the smooth locus of $\Hom_{\irr} \left( \Gamma, \SU (n) \right)$, which coincides up to a global factor with the Atiyah--Bott K\"ahler form. Using these structures, one can extend the volume measure to all of $\Hom_{\irr} \left( \Gamma, \SU (n) \right)$, thus achieving a finite-volume measure space. Note that the canonical projection $\Hom_{\irr} \left( \Gamma, \SU (n) \right) \rightarrow \Hom_{\irr} \left( \Gamma, \PU (n) \right)$, is a $\Z_n^{2g}$-cover and the measure of Atiyah--Bott and Goldman is invariant under deck transformations. Thus this measure descends to $\Hom_{\irr} \left( \Gamma, \PU (n) \right)$. We call this measure $\widetilde{\mu}_n$ and assume, without any loss of generality, that $\widetilde{\mu}_n \left( \Hom_{\irr} \left( \Gamma, \PU (n) \right) \right) = 1$.

Returning to the geometry of $\cM_\Gamma^n$ and employing the Haar measure on $\rU (n)$, we equip $\cM_\Gamma^n$ with a measure, $\mu_n$, which is again unique up to a global factor. Furthermore, $\mu_n \left( \cM_\Gamma^n \right) < \infty$, so without any loss of generality we can assume that $\mu_n \left( \cM_\Gamma^n \right) = \tfrac{1}{n}$. This normalization is convenient for the computations of this paper. When $n = 1$, this measure is just the (normalized) Lebesgue measure on $\cM_1 = \Jac_g = \rU (1)^{2g}$. When $n \geqslant 2$, then $\mu_n$ can be interpreted in a more geometric way. The canonical projection
\begin{equation}
   \pi_n : \cM_\Gamma^n \rightarrow \Hom_{\irr} \left( \Gamma, \PU (n) \right), \label{eq:M_n-fibration}
\end{equation}
makes $\cM_\Gamma^n$ a principal $\Jac_g$-bundle. Using $\mu_1$ and the measure of Atiyah--Bott and Goldman, one can then construct a ``compatible'' measure on $\cM_\Gamma^n$, that is if $A \subseteq \cM_\Gamma^n \Jac_g$, $B \subseteq \cM_1$, and $C \subseteq \Hom_{\irr} \left( \Gamma, \PU (n) \right)$ are measurable sets, such that $A \cong B \times C$, then $\mu_n \left( A \right) = \tfrac{1}{n} \mu_1 \left( B \right) \widetilde{\mu}_n \left( C \right)$.

\smallskip

Next, we introduce an important identity about $\mu_n$. For each $\rhoeq \in \cM_\Gamma$, let
\begin{equation}
   \chi_{\rhoeq} \eqdef \tr \left( \varrho \left( \cdot \right) \right) : \: \Gamma \rightarrow \cx,
\end{equation}
be the \emph{character} of $\rhoeq$ (which depends only on the class of $\rho$). For each $\gamma \in \Gamma$, let
\begin{equation}
   \widehat{\gamma} : \cM_\Gamma \rightarrow \cx; \: \rhoeq \mapsto \widehat{\gamma}\left( \rhoeq \right) \eqdef \chi_{\rhoeq} \left( \gamma \right).
\end{equation}
Let $\widehat{\gamma}_n \eqdef \widehat{\gamma}|_{\cM_\Gamma^n}$. Let us now define the quantity
\begin{equation}
   \mathbb{E}_n \left( \gamma \right) \eqdef \int\limits_{\cM_\Gamma^n} \widehat{\gamma}_n \rd \mu_n, \label{eq:E_gn_def}
\end{equation}
Using Fubini's theorem and the definition of $\mu_n$, one can evaluate $\mathbb{E}_n \left( \gamma \right)$ by first integrating on the fibers of \eqref{eq:M_n-fibration}. Then, if $\gamma$ is not in the commutator group of $\Gamma$, then $\mathbb{E}_n \left( \gamma \right) = 0$. When $\gamma \in \left[ \Gamma, \Gamma \right]$, then the integrand in \cref{eq:E_gn_def} is invariant under the action of $\cM_1$, thus if we define
\begin{equation}
   e_{n, \gamma} : \Hom_{\irr} \left( \Gamma, \PU (n) \right) \mapsto \cx ; \: \pi_n \left( \rhoeq \right) \mapsto \left\{ \begin{array}{rr} \widehat{\gamma}_n \left( \rhoeq \right), & \gamma \in \left[ \Gamma, \Gamma \right], \\ 0, & \gamma \notin \left[ \Gamma, \Gamma \right], \end{array} \right.
\end{equation}
then $e_{n, \gamma}$ is a $\widetilde{\mu}_n$-integrable function that satisfies
\begin{equation}
   \mathbb{E}_n \left( \gamma \right) = \int\limits_{\Hom_{\irr} \left( \Gamma, \PU (n) \right)} e_{n, \gamma} \rd \widetilde{\mu}_n.
\end{equation}
Now pulling back to an integral on $\Hom_{\irr} \left( \Gamma, \SU (n) \right)$ and using \cite{magee_random_I_2022}*{Theorem~1.1} and \cite{magee_random_II_2021}*{Theorem~1.2}, we get the following: For each $\gamma \in \Gamma - \{ e \}$ and $k \in \N_+$, there are (rational) numbers $a_k \left( \gamma \right)$, such that for all $M \in \N_+$, there is a positive number $C_{\gamma, M}$, so that that
\begin{equation}
   \left| \mathbb{E}_n \left( \gamma \right) - \sum\limits_{k = 1}^M \frac{a_k \left( \gamma \right)}{n^k} \right| \leqslant \frac{C_{\gamma,M}}{n^{M + 1}}. \label{eq:chi_expected_value}
\end{equation}

\bigskip

\section{The abstract Bloch transform for surface groups}
\label{sec:abstract}

For each $n \in \N_+$, we define a Hermitian vector bundle, $\cV^n$, over $\cM_\Gamma^n$ as follows: for all $\varrho \in \Hom_{\irr} \left( \Gamma, \rU (n) \right)$, $A \in \End \left( \cx^n \right)$, and $U \in \rU (n)$, let
\begin{equation}
   U \left( \varrho, A \right) \eqdef \left( \Ad \left( U \right) \circ \varrho, \left( U A U^* \right) \right),
\end{equation}
which defines an a free, linear action of $\PU (n)$ on $\Hom_{\irr} \left( \Gamma, \rU (n) \right) \times \End \left( \cx^n \right)$, covering the free action of $\PU (n)$ on $\Hom_{\irr} \left( \Gamma, \rU (n) \right)$. Thus
\begin{equation}
   \cV^n \eqdef \left( \Hom_{\irr} \left( \Gamma, \rU (n) \right) \times \End \left( \cx^n \right) \right) / \PU (n) \rightarrow \cM_\Gamma^n, \label{eq:E_def}
\end{equation}
defines a vector bundle over $\cM_\Gamma^n$ with fibers isomorphic to $\End \left( \cx^n \right)$. Using the Hilbert--Schmidt norm on $\End \left( \cx^n \right)$, we equip $\cV^n$ with a Hermitian metric. If $\left[ \varrho, A \right] \in \cV_{\rhoeq}^n$ and $\gamma \in \Gamma$, then let
\begin{equation}
   \tau_\gamma \left( \left[ \varrho, A \right] \right) \eqdef \tr \left( \varrho \left( \gamma \right)^* A \right).
\end{equation}
Now $\tau_\gamma$ is a well defined section of $\left( \cV^n \right)^*$. Let $L^2 \left( \cV^n \right)$ be the space of sections, $\widehat{\uppsi}$, of $\cV^n$, such that $|\widehat{\uppsi}| \in L^2 \left( \cM_\Gamma^n, \mu_n \right)$. Similarly, if $U \subseteq \cM_\Gamma^n$ is a $\mu_n$-measurable set, then let $L^2 \left( \cV^n \right)$ be the space of sections, $\widehat{\uppsi}$, of $\cV^n\big|_U$, such that $\int_U |\widehat{\uppsi}|^2 \mu_n < \infty$. The assignment $U \mapsto L^2 \left( \cV^n\big|_U \right)$ then defines a Hilbert sheaf over $\cM_\Gamma$ which we call $\cV$.

Finally, let
\begin{equation}
   C_{\cpt} \left( \Gamma \right) \eqdef \left\{ \: \uppsi : \Gamma \rightarrow \cx \: \middle| \: |\Gamma - \uppsi^{- 1} (0)| < \infty \: \right\},
\end{equation}
be the vector space of compactly-supported, complex-valued functions on $\Gamma$. For all $p \in [1, \infty) \cup \{ \infty \}$, let $\ell^p \left( \Gamma \right)$ be defined in the obvious way. Of course, $C_{\cpt} \left( \Gamma \right) \subset \ell^1 \left( \Gamma \right) \subset \ell^p \left( \Gamma \right)$.

\begin{definition}[Abstract Bloch transform]
   Let the \emph{abstract Bloch transform} be the map $\cB$ such that for all $\uppsi \in C_{\cpt} \left( \Gamma \right)$ and $\rhoeq \in \cM_\Gamma$, we have
   \begin{equation}
      \cB \left( \uppsi \right) \left( \rhoeq \right) \eqdef \left[ \varrho, \sum\limits_{\gamma \in \Gamma} \uppsi \left( \gamma \right) \varrho \left( \gamma \right) \right] \in \cV_{\rhoeq}^{\dim \left( \rhoeq \right)}. \label{eq:abstract_Bloch_def}
   \end{equation}
\end{definition}

\smallskip

\begin{remark}
   Before we proceed, let us outline what is known about Bloch transforms of discrete but not necessarily finite groups, including surface groups in particular. The definitions in \cref{eq:E_def,eq:abstract_Bloch_def} do not require $\Gamma$ to be a surface group. In fact, similar abstract Bloch transforms have been studied for a long time. An excellent summary of the general theory can be found in \cite{gruber_noncommutative_2001}. Bloch transforms of the form \eqref{eq:abstract_Bloch_def} can be used to study properties of $\Gamma$-symmetric operators, as it transforms hard-to-understand (e.g. differential) operators into bundle maps (algebraic operators) on $\cV$; cf \cite{sunada_fundamental_1994} for the general idea and \cite{marcolli_twisted_1999} for surface groups. However, the full strength of the Bloch transform in the Euclidean case comes from the fact that $\cB$ is a unitary isomorphism of Hilbert spaces. This statement can be generalized to (potentially noncommutative) groups whose group algebra is a Neumann algebra of Type $I$, as the harmonic analysis is well understood in that setting; see \cite{kocabova_generalized_2008} for the Bloch theory of these groups. Unfortunately, the group algebra of surface groups (of genus at least two) is a full factor of Type $\textit{II}_1$; cf. \cite{akemann_operator_1981}.

   Finally, note that if $C_{\cpt} \left( \Gamma \right)$ is replaced by the space $C_{\cpt} \left( \Gamma \right) \otimes \cH_0$, where $\cH_0$ is a Hilbert space (that can be thought of as the space of modes in a cell), then we can define
   \begin{equation}
      \cB \left( \uppsi \right) \left( \rhoeq \right) \eqdef \left[ \varrho, \sum\limits_{\gamma \in \Gamma} \uppsi \left( \gamma \right) \varrho \left( \gamma \right) \right] \in \cV_{\rhoeq}^{\dim \left( \rhoeq \right)} \otimes \cH_0. \label{eq:abt_with_modes}
   \end{equation}
   Note that if $\uppsi \in \ell^1 \left( \Gamma \right)$, then the right hand side of \cref{eq:abt_with_modes} is absolutely convergent; however, it is not \emph{a priori} clear if it is well-defined for (almost) all $\rhoeq \in \cM_\Gamma$, when $\uppsi \in \ell^2 \left( \Gamma \right) - \ell^1 \left( \Gamma \right)$.

   Thus most of the technical difficulties are present in the base case of \cref{eq:abstract_Bloch_def}, that is when $\cH_0 = \cx$.
\end{remark}

\smallskip

Next, let us make an immediate observation about the abstract Bloch transform:

\begin{lemma}
   We can extend $\cB_n$ to a continuous linear map
   \begin{equation}
      \cB_n : \ell^1 \left( \Gamma \right) \rightarrow L^2 \left( \cV^n \right); \: \uppsi \mapsto \cB \left( \uppsi \right)\big|_{\cM_\Gamma^n} \left( \cdot \right),
   \end{equation}
   between Banach spaces. In fact, $\| \cB_n \| = 1$.
\end{lemma}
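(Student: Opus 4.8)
The plan is to establish the upper bound $\| \cB_n \| \leqslant 1$ by a fiberwise Hilbert--Schmidt estimate combined with the normalization $\mu_n \left( \cM_\Gamma^n \right) = \tfrac{1}{n}$, and then to saturate it with the indicator function of the identity. First I would fix $\uppsi \in C_{\cpt} \left( \Gamma \right)$ and $\rhoeq \in \cM_\Gamma^n$ with representative $\varrho \in \Hom_{\irr} \left( \Gamma, \rU (n) \right)$, and note that the fiber metric on $\cV^n$ is the Hilbert--Schmidt norm, so that $\left| \cB_n \left( \uppsi \right) \left( \rhoeq \right) \right| = \big\| \sum_\gamma \uppsi \left( \gamma \right) \varrho \left( \gamma \right) \big\|_{\mathrm{HS}}$. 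Since each $\varrho \left( \gamma \right)$ is unitary, $\left\| \varrho \left( \gamma \right) \right\|_{\mathrm{HS}} = \sqrt{\tr \left( \id \right)} = \sqrt{n}$, independently of $\gamma$ and of $\rhoeq$; the triangle inequality then yields the uniform pointwise bound $\left| \cB_n \left( \uppsi \right) \left( \rhoeq \right) \right| \leqslant \sqrt{n} \, \| \uppsi \|_{\ell^1}$. Integrating the square of this over $\cM_\Gamma^n$ and using $\mu_n \left( \cM_\Gamma^n \right) = \tfrac{1}{n}$ gives $\| \cB_n \left( \uppsi \right) \|_{L^2 \left( \cV^n \right)}^2 \leqslant n \, \| \uppsi \|_{\ell^1}^2 \cdot \tfrac{1}{n} = \| \uppsi \|_{\ell^1}^2$. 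Here the factor $\sqrt{n}$ coming from unitarity is cancelled exactly by the chosen normalization of $\mu_n$, which is the point of that normalization; note also that working with the $\ell^1$-norm lets us use the crude triangle inequality and thereby sidesteps the finer asymptotics of $\mathbb{E}_n$ in \eqref{eq:chi_expected_value}.

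This shows $\cB_n$ is bounded by $1$ on the dense subspace $C_{\cpt} \left( \Gamma \right) \subset \ell^1 \left( \Gamma \right)$, so it extends uniquely to a continuous linear map on all of $\ell^1 \left( \Gamma \right)$ with $\| \cB_n \| \leqslant 1$. To identify the extension concretely, I would observe that for any $\uppsi \in \ell^1 \left( \Gamma \right)$ the same uniform estimate shows the series $\sum_\gamma \uppsi \left( \gamma \right) \varrho \left( \gamma \right)$ converges absolutely in $\End \left( \cx^n \right)$ for every $\rhoeq$, and that its finite truncations converge to it uniformly in $\rhoeq$; hence the pointwise formula \eqref{eq:abstract_Bloch_def} already defines a measurable $L^2$-section that agrees with the abstract continuity-extension.

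Finally, for equality I would evaluate on the indicator function $\delta_e \in C_{\cpt} \left( \Gamma \right)$ of the identity $e \in \Gamma$, for which $\cB_n \left( \delta_e \right) \left( \rhoeq \right) = \left[ \varrho, \id \right]$ has constant fiber norm $\left\| \id \right\|_{\mathrm{HS}} = \sqrt{n}$, so that $\| \cB_n \left( \delta_e \right) \|_{L^2 \left( \cV^n \right)}^2 = n \cdot \tfrac{1}{n} = 1 = \| \delta_e \|_{\ell^1}^2$. This forces $\| \cB_n \| \geqslant 1$, whence $\| \cB_n \| = 1$. The estimates are entirely elementary; the only genuinely non-formal point is the interchange of limit and section in the extension step, that is, checking that the $\ell^1$-continuous extension coincides with the everywhere-convergent series and is measurable, which follows from the uniform-in-$\rhoeq$ convergence noted above.
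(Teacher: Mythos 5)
Your proposal is correct and follows essentially the same route as the paper: both arguments reduce to the bound $\left\| \varrho \left( \gamma \right) \right\|_{\mathrm{HS}} = \sqrt{n}$ (the paper phrases this as $\left| \chi_{\rhoeq} \right| \leqslant n$ after expanding the square into a double sum over $\Gamma$), cancel the resulting factor of $n$ against the normalization $\mu_n \left( \cM_\Gamma^n \right) = \tfrac{1}{n}$, and saturate the bound with a function supported on a single group element. Your added observation that the $\ell^1$-extension is still given by the absolutely convergent pointwise series is a harmless refinement the paper leaves implicit.
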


\begin{proof}
   If $\uppsi \in C_{\cpt} \left( \Gamma \right)$ and $n \in \N_+$, then
   \begin{align}
      \int\limits_{\cM_\Gamma^n} \left| \cB \left( \uppsi \right) \left( \rhoeq \right) \right|^2 \rd \mu_n \left( \rhoeq \right)   &= \sum\limits_{\gamma_1, \gamma_2 \in \Gamma} \overline{\uppsi \left( \gamma_1 \right)} \uppsi \left( \gamma_2 \right) \int\limits_{\cM_\Gamma^n} \chi_{\rhoeq} \left( \gamma_1^{- 1} \gamma_2 \right) \rd \mu_n \left( \rhoeq \right) \\
         &\leqslant n \mu_n \left( \cM_\Gamma^n \right) \| \uppsi \|_{\ell^1 \left( \Gamma \right)}^2 \\
         &= \| \uppsi \|_{\ell^1 \left( \Gamma \right)}^2,
   \end{align}
   thus as a densely defined operator from $\ell^1 \left( \Gamma \right)$ to $L^2 \left( \cV^n \right)$, we see that $\cB_n$ has operator norm at most one. In particular, it is continuous and can be extended continuously to all of $\ell^1 \left( \Gamma \right)$. Using $\uppsi$ that is supported on a single element of $\Gamma$, we see that the operator norm is exactly 1.
\end{proof}

\smallskip

\begin{corollary}
   Let $\iota : \ell^1 \left( \Gamma \right) \rightarrow \ell^\infty \left( \Gamma \right) = \left( \ell^1 \left( \Gamma \right) \right)^*$ be the natural continuous embedding. The map $\cB_n^* \cB_n - \iota : \ell^1 \left( \Gamma \right) \rightarrow \ell^\infty \left( \Gamma \right)$ is continuous.
\end{corollary}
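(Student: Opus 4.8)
The plan is to compute $\cB_n^* \cB_n$ explicitly as a kernel (convolution-type) operator on $\Gamma$ and to read off continuity from a uniform bound on its kernel. Bare continuity is in fact immediate by soft arguments: $\cB_n : \ell^1 \left( \Gamma \right) \to L^2 \left( \cV^n \right)$ is bounded by the preceding lemma, so its Banach-space adjoint $\cB_n^* : \left( L^2 \left( \cV^n \right) \right)^* \to \ell^\infty \left( \Gamma \right)$ is bounded, and precomposing with the isometric Riesz identification $L^2 \left( \cV^n \right) \cong \left( L^2 \left( \cV^n \right) \right)^*$ exhibits $\cB_n^* \cB_n$ as a composition of bounded maps, while $\iota$ is bounded since $\| \uppsi \|_{\ell^\infty} \leqslant \| \uppsi \|_{\ell^1}$. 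Nevertheless, I would give the explicit form, since it both identifies the operator and produces the sharp bound that subtracting $\iota$ is designed to exploit.

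First I would fix $\uppsi, \upvarphi \in C_{\cpt} \left( \Gamma \right)$ and expand the $L^2 \left( \cV^n \right)$ inner product using the Hilbert--Schmidt metric on the fibers of $\cV^n$; exactly as in the proof of the preceding lemma, and using $\varrho \left( \gamma_1 \right)^* \varrho \left( \gamma_2 \right) = \varrho \left( \gamma_1^{-1} \gamma_2 \right)$ together with \eqref{eq:E_gn_def}, this gives
\[
   \langle \cB_n \uppsi, \cB_n \upvarphi \rangle_{L^2 \left( \cV^n \right)} = \sum\limits_{\gamma_1, \gamma_2 \in \Gamma} \overline{\uppsi \left( \gamma_1 \right)} \, \upvarphi \left( \gamma_2 \right) \, \mathbb{E}_n \!\left( \gamma_1^{-1} \gamma_2 \right).
\]
Pairing against $\upvarphi = \delta_{\gamma_0}$ and unwinding the Riesz identification then yields the explicit kernel representation
\[
   \left( \cB_n^* \cB_n \uppsi \right) \left( \gamma_0 \right) = \sum\limits_{\gamma \in \Gamma} \mathbb{E}_n \!\left( \gamma_0^{-1} \gamma \right) \uppsi \left( \gamma \right),
\]
first for $\uppsi \in C_{\cpt} \left( \Gamma \right)$ and, once the kernel bound below is in hand, for all $\uppsi \in \ell^1 \left( \Gamma \right)$ by density and absolute convergence.

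The decisive two observations are both consequences of the normalization $\mu_n \left( \cM_\Gamma^n \right) = \tfrac{1}{n}$. On the diagonal, $\chi_{\rhoeq} \left( \id \right) = \tr \left( \id_{\cx^n} \right) = n$, so $\mathbb{E}_n \left( \id \right) = n \, \mu_n \left( \cM_\Gamma^n \right) = 1$, which is exactly the diagonal value of the kernel of $\iota$; hence $\cB_n^* \cB_n - \iota$ has kernel $\mathbb{E}_n \!\left( \gamma_0^{-1} \gamma \right)$ for $\gamma \neq \gamma_0$ and $0$ for $\gamma = \gamma_0$. Off the diagonal, since $\varrho \left( \gamma \right) \in \rU (n)$ has all eigenvalues on the unit circle, $\left| \chi_{\rhoeq} \left( \gamma \right) \right| \leqslant n$, whence the uniform bound $\left| \mathbb{E}_n \left( \gamma \right) \right| \leqslant n \, \mu_n \left( \cM_\Gamma^n \right) = 1$ for every $\gamma \in \Gamma$. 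Therefore
\[
   \left| \left( \left( \cB_n^* \cB_n - \iota \right) \uppsi \right) \left( \gamma_0 \right) \right| \leqslant \sum\limits_{\gamma \neq \gamma_0} \left| \mathbb{E}_n \!\left( \gamma_0^{-1} \gamma \right) \right| \, \left| \uppsi \left( \gamma \right) \right| \leqslant \| \uppsi \|_{\ell^1 \left( \Gamma \right)},
\]
uniformly in $\gamma_0$, so that $\| \cB_n^* \cB_n - \iota \|_{\ell^1 \to \ell^\infty} \leqslant 1$ and the map is continuous.

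I expect no substantive analytic obstacle; the only delicate point is bookkeeping, namely correctly tracking the conjugate-linear Riesz identification $L^2 \left( \cV^n \right) \cong \left( L^2 \left( \cV^n \right) \right)^*$ so that the kernel emerges as $\mathbb{E}_n \!\left( \gamma_0^{-1} \gamma \right)$ with $\uppsi$ un-conjugated and the diagonal term exactly $\mathbb{E}_n \left( \id \right) = 1$, and justifying the passage from $C_{\cpt} \left( \Gamma \right)$ to all of $\ell^1 \left( \Gamma \right)$ --- both handled by the absolute convergence guaranteed by $\left| \mathbb{E}_n \right| \leqslant 1$. Any residual conjugation ambiguity in the Riesz convention is harmless, as it leaves the modulus of the kernel, and hence the bound, unchanged. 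The real content is thus the identification of the kernel together with the two elementary evaluations $\mathbb{E}_n \left( \id \right) = 1$ and $\left| \mathbb{E}_n \left( \gamma \right) \right| \leqslant 1$, which is precisely what makes this corollary a natural prelude to the quantitative estimate \eqref{eq:chi_expected_value} underlying the asymptotic unitarity in \Cref{theorem:asymptotic_almost_unitarity}.
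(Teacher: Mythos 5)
Your proposal is correct and follows essentially the paper's (implicit) route: the corollary is stated without proof precisely because it is the soft argument you open with --- boundedness of $\cB_n : \ell^1 \left( \Gamma \right) \rightarrow L^2 \left( \cV^n \right)$ from the preceding lemma gives boundedness of $\cB_n^* \cB_n : \ell^1 \left( \Gamma \right) \rightarrow \ell^\infty \left( \Gamma \right)$, and $\iota$ is bounded since $\| \uppsi \|_{\ell^\infty} \leqslant \| \uppsi \|_{\ell^1}$. Your explicit kernel computation, resting on $\mathbb{E}_n \left( e \right) = 1$ and $\left| \mathbb{E}_n \left( \gamma \right) \right| \leqslant 1$ and yielding $\left\| \cB_n^* \cB_n - \iota \right\| \leqslant 1$, is a correct sharpening that the paper itself reproduces inside the proof of \Cref{theorem:asymptotic_almost_unitarity} as the estimate $\left| f_n^{\uppsi} \left( \uppsi^\prime \right) \right| \leqslant \| \uppsi \|_{\ell^1 \left( \Gamma \right)} \| \uppsi^\prime \|_{\ell^1 \left( \Gamma \right)}$ via $\left| \mathbb{E}_n \left( \gamma \right) - \delta_{\gamma, e} \right| \leqslant 1$.
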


\smallskip

\begin{remark}
   When regarded as a densely defined operator from $\ell^2 \left( \Gamma \right)$ to $L^2 \left( \cV^n \right)$, the adjoint of $\cB_n$ is given by
   \begin{equation}
      \cB_n^* \left( \widehat{\uppsi} \right) \left( \gamma \right) = \int\limits_{\cM_\Gamma^n} \tau_\gamma \left( \widehat{\uppsi} \right) \rd \mu_n, \label{eq:B_n_adjoint}
   \end{equation}
   where $\gamma \in \Gamma$ and $\dom \left( \cB_n^* \right) \subseteq L^2 \left( \cV^n \right)$ can be characterized as
   \begin{equation}
      \dom \left( \cB_n^* \right) = \left\{ \: \widehat{\uppsi} \in L^2 \left( \cV^n \right) \: \middle| \: \exists c \in \rl_+ : \forall \uppsi \in C_{\cpt} \left( \Gamma \right) : \: \left| \left\langle \widehat{\uppsi} \middle| \cB_n \left( \uppsi \right) \right\rangle_{L^2 \left( \cV^n \right)} \right| \leqslant c \| \uppsi \|_{\ell^2 \left( \Gamma \right)} \: \right\}. \label{eq:dom_cB_n_star}
   \end{equation}

   The operator $\cB_n$ has a continuous extension to $\ell^2 \left( \Gamma \right)$ exactly if $\dom \left( \cB_n^* \right) $ is dense in $L^2 \left( \cV^n \right)$, in which case the continuous extension is given by $\cB_n^{**}$. While we cannot prove this claim yet---but prove a weaker version of it in \Cref{theorem:asymptotic_almost_unitarity}---we conjecture that $\cB_n$ is continuous and \emph{asymptotically unitary}, that is $\cB_n^* \cB_n$ converges (in the operator norm topology) to the identity. In fact, \cref{eq:chi_expected_value} suggests that the rate of convergence should be $O \left( \tfrac{1}{n} \right)$. Such a generalization could be achieved by a stronger version of \cite{magee_random_II_2021}*{Theorem~1.2}.
\end{remark}

\medskip

In our first main result below, we study functional analytic and algebraic properties of $\cB_n$.

\begin{theorem}
   \label{theorem:asymptotic_almost_unitarity}
   For each $\uppsi \in \ell^1 \left( \Gamma \right)$, the sequence, $\left( \cB_n^* \cB_n \left( \uppsi \right) \right)_{n \in \N_+}$, converges to $\uppsi$ in the topology of $\ell^\infty \left( \Gamma \right)$.

   Furthermore, if $T_\gamma$ is the translation map on $\ell^i \left( \Gamma \right), i \in \left[ 1, \infty \right) \cup \{ \infty \}$, given by $\gamma \in \Gamma$, that is
   \begin{equation}
      \forall \gamma^\prime \in \Gamma : \: T_\gamma \left( \uppsi \right) \left( \gamma^\prime \right) = \uppsi \left( \gamma \gamma^\prime \right),
   \end{equation}
   and $\widehat{T}_\gamma$ is defined on section of $\cE$ via
   \begin{equation}
      \widehat{T}_\gamma \left( \left[ \varrho, A \right] \right) \eqdef \left[ \varrho, \varrho \left( \gamma \right)^{- 1} A \right],
   \end{equation}
   then both $T_\gamma$ and $\widehat{T}_\gamma$ (restricted to $\cE^n$) are unitary and we have that
   \begin{equation}
      \cB \circ T_\gamma = \widehat{T}_\gamma \circ \cB.
   \end{equation}
\end{theorem}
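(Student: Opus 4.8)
The plan is to separate the two assertions: the asymptotic statement, which carries the analytic content, and the intertwining identity, which is a direct algebraic computation. For the first, I would begin by making $\cB_n^* \cB_n$ completely explicit. Combining the adjoint formula \eqref{eq:B_n_adjoint} with the definitions of $\tau_\gamma$ and of $\cB$, and using $\varrho(\gamma)^* = \varrho(\gamma^{- 1})$ together with the definition \eqref{eq:E_gn_def} of $\mathbb{E}_n$, one finds for $\uppsi \in C_{\cpt}(\Gamma)$ that
\[
   \cB_n^* \cB_n (\uppsi)(\gamma) = \sum_{\gamma' \in \Gamma} \uppsi(\gamma') \, \mathbb{E}_n(\gamma^{- 1} \gamma'),
\]
so $\cB_n^* \cB_n$ is convolution against the kernel $\mathbb{E}_n$. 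Since $\chi_{\rhoeq}(e) = \tr(\id) = n$ and $\mu_n(\cM_\Gamma^n) = \tfrac{1}{n}$, the diagonal term is $\mathbb{E}_n(e) = 1$, which reproduces $\uppsi(\gamma)$ exactly, whence
\[
   \cB_n^* \cB_n(\uppsi)(\gamma) - \uppsi(\gamma) = \sum_{\gamma' \neq \gamma} \uppsi(\gamma') \, \mathbb{E}_n(\gamma^{- 1} \gamma').
\]
This extends from $C_{\cpt}(\Gamma)$ to all of $\ell^1(\Gamma)$ by the continuity recorded in the Corollary.

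The core estimate is then $|\cB_n^* \cB_n(\uppsi)(\gamma) - \uppsi(\gamma)| \leqslant \sum_{\gamma' \neq \gamma} |\uppsi(\gamma')| \, |\mathbb{E}_n(\gamma^{- 1} \gamma')|$, to be used alongside two facts: the crude uniform bound $|\mathbb{E}_n(\delta)| \leqslant 1$ (from $|\tr \varrho(\delta)| \leqslant n$ and $\mu_n(\cM_\Gamma^n) = \tfrac{1}{n}$), and the pointwise decay $\mathbb{E}_n(\delta) \to 0$ for each fixed $\delta \neq e$ supplied by \eqref{eq:chi_expected_value}. For the norm topology of $\ell^\infty(\Gamma)$ I would bound the right-hand side by $\| \uppsi \|_{\ell^1(\Gamma)} \cdot \sup_{\delta \neq e} |\mathbb{E}_n(\delta)|$. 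Testing against the indicator $\uppsi = \mathbbm{1}_{\{\gamma_0\}}$ shows this bound is sharp, so the full assertion is in fact \emph{equivalent} to the uniform decay $\sup_{\delta \neq e} |\mathbb{E}_n(\delta)| \to 0$, and the general $\ell^1$ case then follows from it immediately.

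The hard part is precisely this uniform decay. The expansion \eqref{eq:chi_expected_value} is only pointwise in $\gamma$: its coefficients $a_k(\gamma)$ and error constants $C_{\gamma, M}$ depend on $\gamma$, so it yields $\mathbb{E}_n(\delta) \to 0$ for each fixed $\delta$ but no a priori rate uniform over the infinite group $\Gamma$. Since $\Gamma$ is infinite and discrete there is no compactness to exploit, and the genuine danger is concentration of $|\mathbb{E}_n|$ on words of growing length. I would close this gap by invoking the uniform content of the Magee estimates, namely a bound $|\mathbb{E}_n(\gamma)| \leqslant C/n$ valid for all $\gamma \neq e$, which gives $\sup_{\delta \neq e} |\mathbb{E}_n(\delta)| \to 0$ at once. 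I should flag that if only the pointwise \eqref{eq:chi_expected_value} together with $|\mathbb{E}_n| \leqslant 1$ is granted, then dominated convergence (with summable dominating function $|\uppsi|$, applied to the single and to the double sum against test functions in $\ell^1$) still delivers the weaker weak-$*$ and coordinatewise convergence cleanly; upgrading to the stated $\ell^\infty$-norm convergence is exactly where the uniform input is indispensable.

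The second assertion is routine. That $T_\gamma$ is a surjective isometry of each $\ell^i(\Gamma)$, unitary when $i = 2$, is immediate, as it is induced by the measure-preserving bijection $\gamma' \mapsto \gamma \gamma'$. For $\widehat{T}_\gamma$ I would first verify that it descends to the quotient defining $\cV^n$: a short computation shows $\widehat{T}_\gamma(U(\varrho, A)) = U \, \widehat{T}_\gamma([\varrho, A])$ for $U \in \rU(n)$, so the map is $\PU(n)$-equivariant and well defined fiberwise. Fiberwise it is left multiplication by the unitary $\varrho(\gamma)^{- 1}$, which preserves the Hilbert--Schmidt norm, so $\widehat{T}_\gamma$ is a fiberwise isometry with inverse $\widehat{T}_{\gamma^{- 1}}$, hence unitary. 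Finally, the intertwining $\cB \circ T_\gamma = \widehat{T}_\gamma \circ \cB$ follows by reindexing: with $\eta = \gamma \gamma'$ one has $\sum_{\gamma'} \uppsi(\gamma \gamma') \varrho(\gamma') = \varrho(\gamma)^{- 1} \sum_{\eta} \uppsi(\eta) \varrho(\eta)$, which is exactly the effect of $\widehat{T}_\gamma$ on $\cB(\uppsi)$.
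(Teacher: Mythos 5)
Your route is genuinely different from the paper's and in some ways more transparent. The paper never writes $\cB_n^*\cB_n$ as convolution against the kernel $\mathbb{E}_n$; instead it fixes $\uppsi$ and studies the functionals $f_n^{\uppsi}(\uppsi') = \langle \cB_n(\uppsi)\,|\,\cB_n(\uppsi')\rangle - \langle \uppsi\,|\,\uppsi'\rangle$ on $\ell^1(\Gamma)$, shows they are uniformly bounded and vanish in the limit on the dense subspace $C_{\cpt}(\Gamma)$ (using the $\gamma$-dependent rate $c(\gamma)/n$ from \eqref{eq:chi_expected_value} and the finiteness of the supports), and then applies Banach--Steinhaus to conclude that $\cB_n^*\cB_n(\uppsi) - \uppsi \to 0$ weakly-$*$ in $\ell^\infty(\Gamma) = (\ell^1(\Gamma))^*$. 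Your dominated-convergence argument on the single and double sums delivers exactly the same weak-$*$ and coordinatewise conclusions from exactly the same pointwise input, just without the uniform-boundedness machinery; both approaches are correct for that weaker conclusion, and your identification of $\mathbb{E}_n(e) = 1$ as the reason the diagonal term is reproduced exactly matches the paper's $\delta_{\gamma,e}$ bookkeeping.

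Where you should be careful is the step you yourself flag as the crux. The uniform bound $\sup_{\delta \neq e}|\mathbb{E}_n(\delta)| \leqslant C/n$ that you propose to ``invoke'' is not available: the cited results of Magee give \eqref{eq:chi_expected_value} with constants $C_{\gamma,M}$ depending on $\gamma$, and the uniform (indeed commutator-length-improved) decay is precisely the content of Magee's \emph{conjecture}, which the paper records in the remark following \Cref{corollary:compact} as an open ingredient that \emph{would} upgrade the convergence. So your main line, read as a proof of genuine $\ell^\infty$-norm convergence, has a gap at exactly that invocation; your sharpness observation with $\uppsi = \mathbbm{1}_{\{\gamma_0\}}$ shows the gap cannot be papered over. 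The paper's own proof does not claim more than weak-$*$ convergence (its final sentence is that the sequence ``converges weakly to zero in $(\ell^1(\Gamma))^* = \ell^\infty(\Gamma)$''), so the statement's phrase ``in the topology of $\ell^\infty(\Gamma)$'' must be read as the weak-$*$ topology; with that reading your hedged fallback paragraph is a complete and correct proof. Your treatment of $T_\gamma$, $\widehat{T}_\gamma$, and the reindexing $\eta = \gamma\gamma'$ for the intertwining identity coincides with what the paper leaves as ``straightforward computations,'' and your check that $\widehat{T}_\gamma$ descends through the $\PU(n)$-quotient is a worthwhile detail the paper omits.
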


\begin{proof}
   Note that
   \begin{equation}
      \left| \mathbb{E}_n \left( \gamma \right) - \delta_{\gamma, e} \right| \leqslant 1,
   \end{equation}
   and, by \cref{eq:chi_expected_value}, there exists $c : \Gamma \rightarrow \rl_+$ such that for all $\gamma \in \Gamma$ and $n \in \N_+$
   \begin{equation}
      \left| \mathbb{E}_n \left( \gamma \right) - \delta_{\gamma, e} \right| \leqslant \frac{c (\gamma)}{n}.
   \end{equation}
   
   Let $\uppsi, \uppsi^\prime \in C_{\cpt} \left( \Gamma \right)$. Now we get that
   \begin{align}
      \left| \left\langle \cB_n \left( \uppsi \right) \middle| \cB_n \left( \uppsi^\prime \right) \right\rangle_{L^2 \left( \cV^n \right)} - \left\langle \uppsi \middle| \uppsi^\prime \right\rangle_{\ell^2 \left( \Gamma \right)} \right| &= \left| \sum\limits_{\gamma_1, \gamma_2 \in \Gamma} \overline{\uppsi \left( \gamma_1 \right)} \uppsi^\prime \left( \gamma_2 \right) \left( \mathbb{E}_n \left( \gamma_1^{- 1} \gamma_2 \right) - \delta_{\gamma_1^{- 1} \gamma_2, e} \right) \right| \\
            &\leqslant \sum\limits_{\gamma_1, \gamma_2 \in \Gamma} |\uppsi \left( \gamma_1 \right)| |\uppsi^\prime \left( \gamma_2 \right)| \\
            &= \| \uppsi \|_{l^1 \left( \Gamma \right)} \| \uppsi^\prime \|_{l^1 \left( \Gamma \right)}.
   \end{align}
   Thus for any fixed $\uppsi \in C_{\cpt} \left( \Gamma \right)$, there exists a continuous sequence of linear functionals, denoted by $\left( f_n^\uppsi \in \left( \ell^1 \left( \Gamma \right) \right)^* \right)_{n \in \N_+}$ and defined via
   \begin{equation}
      f_n^\uppsi \left( \uppsi^\prime \right) \eqdef \left\langle \cB_n \left( \uppsi \right) \middle| \cB_n \left( \uppsi^\prime \right) \right\rangle_{L^2 \left( \cV^n \right)} - \left\langle \uppsi \middle| \uppsi^\prime \right\rangle_{\ell^2 \left( \Gamma \right)},
   \end{equation}
   that is bounded (by one) in the operator norm.

   Let $\uppsi, \uppsi^\prime \in C_{\cpt} \left( \Gamma \right)$ and
   \begin{equation}
      K \left( \uppsi, \uppsi^\prime \right) \eqdef \sup \left( \left\{ \: \left| c \left( \gamma_1^{- 1} \gamma_2 \right) \right| \: \middle| \: \uppsi \left( \gamma_1 \right) \neq 0, \: \& \: \uppsi^\prime \left( \gamma_2 \right) \neq 0 \: \right\} \right), \label{eq:K_def}
   \end{equation}
   which is finite, since $\uppsi, \uppsi^\prime \in C_{\cpt} \left( \Gamma \right)$. Using \cref{eq:chi_expected_value,eq:K_def}, and that $\uppsi, \uppsi^\prime$ are bounded functions, we get
   \begin{align}
      \left| f_n^\uppsi \left( \uppsi^\prime \right) \right| &= \left| \sum\limits_{\gamma_1, \gamma_2 \in \Gamma} \overline{\uppsi \left( \gamma_1 \right)} \uppsi^\prime \left( \gamma_2 \right) \left( \mathbb{E}_n \left( \gamma_1^{- 1} \gamma_2 \right) - \delta_{\gamma_1^{- 1} \gamma_2, e} \right) \right| \\
            &\leqslant \frac{K \left( \uppsi, \uppsi^\prime \right)}{n} \sum\limits_{\gamma_1, \gamma_2 \in \Gamma} |\uppsi \left( \gamma_1 \right)| |\uppsi^\prime \left( \gamma_2 \right)| \\
            &= \frac{K \left( \uppsi, \uppsi^\prime \right)}{n} \| \uppsi \|_{l^1 \left( \Gamma \right)} \| \uppsi^\prime \|_{l^1 \left( \Gamma \right)}.
   \end{align}
   Thus for any fixed $\uppsi \in C_{\cpt} \left( \Gamma \right)$, the sequence of functionals, $\left( f_n^\uppsi \right)_{n \in \N_+}$ converges pointwise on a dense subset $C_{\cpt} \left( \Gamma \right) \subset \ell^1 \left( \Gamma \right)$. By the Banach--Steinhaus Theorem \cite{buhler_functional_2018}*{Theorem~2.1.5~(ii)}, $\left( f_n^\uppsi \right)_{n \in \N_+}$ converges pointwise on $\ell^1 \left( \Gamma \right)$ to a functional $f^\uppsi \in \left( \ell^1 \left( \Gamma \right) \right)^*$. Furthermore, using the above computation again, for all $\uppsi^\prime \in C_{\cpt} \left( \Gamma \right)$ we have that $f^\uppsi \left( \uppsi^\prime \right) = 0$, thus $f^\uppsi = 0$, by density. Hence the sequence, $\left( \cB_n^* \cB_n \left( \uppsi \right) - \uppsi \right)_{n \in \N_+}$, converges weakly to zero in $\left( \ell^1 \left( \Gamma \right) \right)^* = \ell^\infty \left( \Gamma \right)$, which proves the first claim.

   The claims about $T_\gamma$ and $\widehat{T}_\gamma$ follow from straightforward computations.
\end{proof}

\smallskip

\begin{corollary}
   \label{corollary:compact}
   The abstract Bloch transform, $\cB$, is injective and if $\uppsi, \uppsi^\prime \in C_{\cpt} \left( \Gamma \right)$, then
   \begin{equation}
      \lim\limits_{n \rightarrow \infty} \left\langle \cB_n \left( \uppsi \right) \middle| \cB_n \left( \uppsi^\prime \right) \right\rangle_{L^2 \left( \cV^n \right)} = \left\langle \uppsi \middle| \uppsi^\prime \right\rangle_{\ell^2 \left( \Gamma \right)}.
   \end{equation}
   In particular, the norm of $\uppsi \in C_{\cpt} \left( \Gamma \right)$ can be recovered from $\cB \left( \uppsi \right)$.
\end{corollary}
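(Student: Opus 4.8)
The plan is to read everything off from the quantitative estimate already contained in the proof of \Cref{theorem:asymptotic_almost_unitarity}, rather than from its weak-$*$ conclusion. Precisely, for any $\uppsi, \uppsi^\prime \in C_{\cpt} \left( \Gamma \right)$ that proof produces the bound
\begin{equation*}
   \left| \left\langle \cB_n \left( \uppsi \right) \middle| \cB_n \left( \uppsi^\prime \right) \right\rangle_{L^2 \left( \cV^n \right)} - \left\langle \uppsi \middle| \uppsi^\prime \right\rangle_{\ell^2 \left( \Gamma \right)} \right| \leqslant \frac{K \left( \uppsi, \uppsi^\prime \right)}{n} \| \uppsi \|_{\ell^1 \left( \Gamma \right)} \| \uppsi^\prime \|_{\ell^1 \left( \Gamma \right)},
\end{equation*}
in which $K \left( \uppsi, \uppsi^\prime \right)$ is finite because $\uppsi$ and $\uppsi^\prime$ have finite support. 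I prefer this route over invoking the weak-$*$ statement of the theorem directly, since the latter would force me to justify the adjoint identity $\left\langle \cB_n \left( \uppsi \right) \middle| \cB_n \left( \uppsi^\prime \right) \right\rangle = \left\langle \cB_n^* \cB_n \left( \uppsi \right) \middle| \uppsi^\prime \right\rangle$ and, in particular, to verify that $\cB_n \left( \uppsi \right)$ lies in $\dom \left( \cB_n^* \right)$, whereas the displayed inequality is elementary and already in hand.

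First I would let $n \rightarrow \infty$ in the inequality above. For fixed $\uppsi, \uppsi^\prime \in C_{\cpt} \left( \Gamma \right)$ the right-hand side is $O \left( 1 / n \right)$, hence tends to zero, which immediately yields the inner-product limit
\begin{equation*}
   \lim\limits_{n \rightarrow \infty} \left\langle \cB_n \left( \uppsi \right) \middle| \cB_n \left( \uppsi^\prime \right) \right\rangle_{L^2 \left( \cV^n \right)} = \left\langle \uppsi \middle| \uppsi^\prime \right\rangle_{\ell^2 \left( \Gamma \right)}.
\end{equation*}
Specializing to $\uppsi^\prime = \uppsi$ gives the norm-recovery formula $\| \uppsi \|_{\ell^2 \left( \Gamma \right)}^2 = \lim_{n \rightarrow \infty} \| \cB_n \left( \uppsi \right) \|_{L^2 \left( \cV^n \right)}^2$, which is exactly the statement that the Hilbert norm of $\uppsi$ is determined by $\cB \left( \uppsi \right)$.

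Finally I would deduce injectivity. Suppose $\uppsi \in C_{\cpt} \left( \Gamma \right)$ satisfies $\cB \left( \uppsi \right) = 0$. Since $\cB_n \left( \uppsi \right)$ is by definition the restriction $\cB \left( \uppsi \right) \big|_{\cM_\Gamma^n}$, vanishing of $\cB \left( \uppsi \right)$ across all of $\cM_\Gamma = \bigcup_{n \in \N_+} \cM_\Gamma^n$ forces $\cB_n \left( \uppsi \right) = 0$, and hence $\| \cB_n \left( \uppsi \right) \|_{L^2 \left( \cV^n \right)} = 0$, for every $n \in \N_+$. The norm-recovery formula then gives $\| \uppsi \|_{\ell^2 \left( \Gamma \right)} = 0$, so $\uppsi = 0$. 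The whole argument is essentially a corollary of the $1 / n$ estimate, and the only point demanding any care is the passage from the vanishing of $\cB \left( \uppsi \right)$ as a global section to the vanishing of each restriction $\cB_n \left( \uppsi \right)$; this is the step I expect to be the principal (though still mild) obstacle, and it is resolved simply by recalling that the $\cM_\Gamma^n$ exhaust $\cM_\Gamma$.
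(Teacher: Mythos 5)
Your proof is correct and follows essentially the route the paper intends: the corollary is stated without a separate proof precisely because it drops out of the quantitative $K \left( \uppsi, \uppsi^\prime \right) / n$ estimate established inside the proof of \Cref{theorem:asymptotic_almost_unitarity}, which is exactly the bound you invoke. Your deductions of the inner-product limit, norm recovery, and injectivity on $C_{\cpt} \left( \Gamma \right)$ are all sound.
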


\smallskip

\begin{remark}
   In \cite{magee_random_II_2021}, Magee conjectures that for all $\gamma \in \left[ \Gamma, \Gamma \right]$, there is a positive number, $C$, such that if $|\gamma|$ is the commutator length of $\gamma$, then
   \begin{equation}
      \left| \mathbb{E}_n \left( \gamma \right) \right| \leqslant \frac{C}{n^{2 |\gamma|}}.
   \end{equation}
   Such a bound would be enough to prove that $\cB_n$ has a continuous extension to $\ell^2 \left( \Gamma \right)$ and that the sequence of operators, $\left( \cB_n^* \cB_n \right)_{n \in \N_+}$, converges pointwise to the identity of $\ell^2 \left( \Gamma \right)$. It is still not \emph{a priori} enough to show that the convergence holds in the operator norm.
\end{remark}

\smallskip

\begin{remark}
   Let $H$ be a linear operator on $C_\cpt \left( \Gamma \right)$ that is $\Gamma$-periodic
   \begin{equation}
      \forall \gamma \in \Gamma : \quad H \circ T_\gamma = T_\gamma \circ H,
   \end{equation}
   where $T_\gamma$ is the translation operator with respect to $\gamma$. Then we have that
   \begin{equation}
      \widehat{H}_n \eqdef \cB_n \circ H \circ \cB_n^*,
   \end{equation}
   is a bundle map from $\cB_n \left( C_\cpt \left( \Gamma \right) \right)$ that covers the identity of $\cM_\Gamma^n$, that is, if $\widehat{\uppsi}$ is a section in $\cB_n \left( C_\cpt \left( \Gamma \right) \right)$, $f$ is a (measurable) function on $\cM_\Gamma^n$, and $\rhoeq \in \cM_\Gamma^n$, then
   \begin{equation}
      \widehat{H}_n \left( f \widehat{\uppsi} \right) \left( \rhoeq \right) = f \left( \rhoeq \right) \widehat{H}_n \left( \widehat{\uppsi} \right) \left( \rhoeq \right).
   \end{equation}
   In particular, $\widehat{H}_n \left( \widehat{\uppsi} \right) \left( \rhoeq \right)$ is in $\cV_{\rhoeq}^n$ and it can be computed from $\widehat{\uppsi} \left( \rhoeq \right)$ alone.

   More generally, if there is an automorphism $\alpha : \Gamma \rightarrow \Gamma$, then it defines an automorphism of $\cM_\Gamma^n$, given by $\widehat{\alpha} \eqdef \left( \rhoeq \mapsto \rhoeq \circ \alpha^{- 1} \right)$. Now if $H$ is a continuous operator on $\ell^2 \left( \Gamma \right)$ such that
   \begin{equation}
      \forall \gamma \in \Gamma : \quad H \circ T_\gamma = T_{\alpha \left( \gamma \right)} \circ H,
   \end{equation}
   then we have that
   \begin{equation}
      \widehat{H}_n \eqdef \cB_n \circ H \circ \cB_n^*,
   \end{equation}
   is a bundle map from $\cB_n \left( C_\cpt \left( \Gamma \right) \right)$ that covers the map $\widehat{\alpha}$. More concretely if $\widehat{\uppsi}$ is a section of $\cV^n$, $f$ is a function on $\cM_\Gamma^n$, and $\rhoeq \in \cM_\Gamma^n$, then
   \begin{equation}
      \widehat{H}_n \left( f \widehat{\uppsi} \right) \left( \rhoeq \right) = f \left( \widehat{\alpha} \left( \rhoeq \right) \right) \widehat{H}_n \left( \widehat{\uppsi} \right) \left( \rhoeq \right).
   \end{equation}
   In particular, $\widehat{H}_n \left( \widehat{\uppsi} \right) \left( \rhoeq \right)$ is in $\cV_{\widehat{\alpha} \left( \rhoeq \right)}^n$ and it can be computed from $\widehat{\uppsi} \left( \rhoeq \right)$ alone.

   Finally, \Cref{theorem:asymptotic_almost_unitarity} implies that the sequence of operators, $\left( \cB_n^* \circ \widehat{H}_n \circ \cB_n \right)_{n \in \N_+}$, converges pointwise to $H$.
\end{remark}

\smallskip

\begin{remark}
   An immediate consequence of the general form of the Bloch transform \eqref{eq:abstract_Bloch_def} is a convolution theorem. Assume that $\cH_0$ is an algebra and thus so is $\cV_{\rhoeq}^{\dim \left( \rhoeq \right)} \otimes \cH_0$. Then we can define the convolution of $\uppsi_1, \uppsi_1 \in \cH$ as
   \begin{equation}
      \left( \uppsi_1 \star \uppsi_2 \right) \left( \gamma \right) \eqdef \sum\limits_{\gamma^\prime \in \Gamma} \uppsi_1 \left( \gamma^\prime \right) \cdot \uppsi_2 \left( \left( \gamma^\prime \right)^{- 1} \gamma \right),
   \end{equation}
   and straightforward computation shows that
   \begin{equation}
      \cB \left( \uppsi_1 \star \uppsi_2 \right) \left( \rhoeq \right) = \cB \left( \uppsi_1 \right) \left( \rhoeq \right) \cdot \cB \left( \uppsi_2 \right) \left( \rhoeq \right).
   \end{equation}
\end{remark}

\smallskip

Finally, by \Cref{corollary:compact}, almost all information about $\uppsi \in C_{\cpt} \left( \Gamma \right) \subset \ell^2 \left( \Gamma \right)$ is encoded in $\cB_n \left( \uppsi \right)$, for large $n \in \N_+$. On the one hand, this is not surprising, as \cref{eq:chi_expected_value} implies that representations of large rank separate elements of $\Gamma$. More precisely, for any $\gamma \in \Gamma$, there exists $n_\gamma \in \N_+$, such that for all $n \in \N_+ \cap \left[ n_\gamma, \infty \right)$, there exists $\varrho \in \Hom_{\irr} \left( \Gamma, \rU (n) \right)$, such that $\gamma \notin \ker \left( \varrho \right)$. On the other hand, this is contrast to the Euclidean case, where only the 1-dimensional representations are important. Quantifying the amount of information encoded in $\cB_n \left( \uppsi \right)$ and extending the above results to all of $\ell^2 \left( \Gamma \right)$ are the most important and interesting directions for further research.

\bigskip

\section{The hyperbolic Bloch transform}
\label{sec:HBT}

In this section we study the case when the functions, $\uppsi$, are complex functions on the hyperbolic plane. In this case $\cH_0$ in \cref{eq:abt_with_modes} can be chosen to be the space of (square integrable) functions on a fundamental cell for $\Gamma$ in $\bH$. However, in order to better understand the Bloch transform, we modify \cref{eq:abt_with_modes}, to get a geometrically more meaningful transform. This is achieved by the fact that for closed surfaces the Riemann--Hilbert map is an isomorphism between irreducible, finite dimensional unitary representations and flat, finite rank, irreducible, Hermitian vector bundles; cf. \cites{donaldson_anti_1985}.

\subsection{The wave functions on the hyperbolic plane}
\label{sec:wave}

So far we regarded $\Gamma$ as a group abstractly defined by \cref{eq:relation}. For the rest of this paper, we fix a concrete realization of $\Gamma$ as the fundamental group of a closed surface, $\Sigma$, of genus $g \in \N \cap \left[ 2, \infty \right)$ with a fixed hyperbolic Riemannian metric.

Note that the above setup immediately yields a faithful embedding of $\Gamma$ into the isometry group of the hyperbolic plane, making it a Fuchsian group. Hence, $\Gamma$ has canonical unitary and faithfully representations on the Sobolev spaces, $L_k^p \left( \bH \right)$, for any $k \in \N$ and $p \in \left[ 1, \infty \right) \cup \{ \infty \}$, via $\gamma \mapsto T_\gamma \eqdef \left( f \mapsto f \circ \gamma^{- 1} \right)$. Now we have the following version of the Bloch transform in \eqref{eq:abt_with_modes}: Let $\fC \subset \bH$ be a fundamental cell for the action of $\Gamma$ and for all $\left( \uppsi, \varrho, x \right) \in C_\cpt^\infty \left( \bH \right) \times \Hom_{\irr} \left( \Gamma, \rU (n) \right) \times \fC$, let
\begin{equation}
   \widetilde{\cB}_n \left( \uppsi, \varrho \right) (x) \eqdef \sum\limits_{\gamma \in \Gamma} \uppsi \left( \gamma (x) \right) \varrho \left( \gamma \right) \in L^2 \left( C \right) \otimes \End \left( \cx^n \right),
\end{equation}
and it is easy to see that if $\varrho \in \rhoeq$, then
\begin{equation}
   \left[ \varrho, \widetilde{\cB}_n \left( \uppsi, \varrho \right) \right] \in L^2 \left( C \right) \otimes \cV_{\rhoeq}^n,
\end{equation}
is independent of the choice of representative. As in the Euclidean case, $\widetilde{\cB} \left( \uppsi, \varrho \right)$ can be viewed as a \emph{quasi-periodic function}, except now the $\widetilde{\cB} \left( \varrho, \uppsi \right)$ is matrix-valued and the quasi-periodicity takes the form of
\begin{equation}
   T_\gamma \left( \widetilde{\cB} \left( \uppsi, \varrho \right) \right) = \widetilde{\cB} \left( \varrho, \uppsi \right) \varrho \left( \gamma \right).
\end{equation}
In particular, $\left| \widetilde{\cB} \left( \uppsi, \varrho \right) \right|$ is both $\Gamma$-periodic and independent of the choice of $\varrho$ within $\rhoeq$. Now \Cref{theorem:asymptotic_almost_unitarity} implies that for any smooth, compactly-supported function, $\uppsi$, we have that
\begin{equation}
   \lim\limits_{n \rightarrow \infty} \int\limits_{\cM_\Gamma^n} \left( \int\limits_C \left| \widetilde{\cB}_n \left( \uppsi, \varrho \right) \right|^2 \vol_\bH \right) \rd \mu_n = \| \uppsi \|_{L^2 \left( \bH \right)}^2.
\end{equation}
In the next section we define another (equivalent) version of $\widetilde{\cB}$ whose values are sections of certain endomorphism bundles over $\Sigma$.

\medskip

\subsection{Stable bundles and the hyperbolic Bloch transform}
\label{sec:stable}

Fix $x_0 \in \bH$ and $n \in \N_+$. Let $\pi : \bH \rightarrow \Sigma$ be the projection (the factor map), and $y_0 \eqdef \pi \left( x_0 \right) \in \Sigma$. Let furthermore $\nabla^0$ be the product metric on $E^{(n)} \eqdef \Sigma \times \cx^n$ (with its standard Hermitian structure).

Let $\varrho \in \Hom_{\irr} \left( \Gamma, \rU (n) \right)$. By \cite{donaldson_anti_1985}, there exists a flat, irreducible, and metric compatible connection, $\nabla$, whose holonomy representation at $y_0 \eqdef \pi \left( x_0 \right)$ in the standard orthonormal frame of $E^{(n)}$ is $\varrho$. We write $\varrho_\nabla$ from now on to emphasize the connection. Furthermore, this connection is unique in Coulomb gauge with respect to $\nabla^0$. Let $A_\nabla \eqdef \nabla - \nabla^0$ be the connection 1-form and $\fU_\nabla : \bH \rightarrow \rU (n)$ be the (unique) solution to the following initial value problem:
\begin{align}
   \rd \fU_\nabla    &= - \pi^* \left( A_\nabla \right) \fU_\nabla, \\
   \fU_\nabla (x_0)  &= \id_n.
\end{align}
Now $\fU_\nabla$ can be interpreted as the pullback of the change-of-basis matrix for the parallel transport map of $\nabla$. It is easy to see now that $\fU_\nabla$ satisfies
\begin{equation}
   \fU_\nabla \circ \gamma = \varrho_\nabla \left( \gamma \right) \fU_\nabla\label{eq:equivariance}
\end{equation}
for all $\gamma \in \Gamma$. If $u \in \rU (n)$ is a smooth function and $\nabla^u \eqdef u \circ \nabla \circ u^*$, then
\begin{equation}
   \fU_{\nabla^u} = u \fU_\nabla u (x_0)^*. \label{eq:gauge_change_for_U}
\end{equation}

\smallskip

Let $\cC_\Gamma^n$ be the space of irreducible, flat, and metric compatible connections on $E^{(n)}$ that are in Coulomb gauge with respect to $\nabla^0$. By \cite{donaldson_anti_1985}, $\cC_\Gamma^n$ is canonically isomorphic to $\Hom_{\irr} \left( \Gamma, \rU (n) \right)$ via $\nabla \mapsto \varrho_\nabla$. In fact the usual actions of $\rU (n)$ commute with this map, thus it factors down to an isomorphism of the moduli space of irreducible, flat, and metric compatible connections on $E^{(n)}$ and $\cM_\Gamma^n$. For each $\nabla$ in $\cC_\Gamma^n$, let the corresponding equivalence class be $\nablaeq$. In particular, under the above identification of $\cM_\Gamma^n$ and $\cC_\Gamma^n / \rU (n)$, we have $\nablaeq \cong \widehat{\varrho}_\nabla$.

\smallskip

Next we define another bundle of Hilbert spaces, $\cE^n$, over $\cM_\Gamma^n$ that serves the role of $\cV^n$ for our hyperbolic Bloch transform. Let $\nabla \in \cC_\Gamma^n$ and define
\begin{equation}
   \widetilde{\cE}_\nabla^n \eqdef L^2 \left( \Sigma, \End \left( E^{(n)} \right) \right),
\end{equation}
which defines a Hilbert bundle using the pointwise Hilbert--Schmidt norm and the Riemannian metric on $\Sigma$. Now for all $u \in \rU (n)$, $\uppsi \in \widetilde{\cE}_\nabla^n$, let
\begin{equation}
   u \left( \varrho, \uppsi \right) \eqdef \left( u \varrho u^*, u \uppsi u^* \right) \in \widetilde{\cE}_{\nabla^u}^n.
\end{equation}
Let $\cE^n \eqdef \widetilde{\cE}^n / \rU (n)$ which is a Hilbert bundle over $\cM_\Gamma^n$ and let $L^2 \left( \cE^n \right)$ be the Hilbert space of square integrable sections of $\cE^n$ (with respect to $\mu_n$). Finally, let $\cE$ be the Hilbert-sheaf of $\cM_\Gamma$ induced by the vector bundles, $\cE^n$, and $L^2 \left( \cE \right) = \oplus_{n \in \N_+} L^2 \left( \cE^n \right)$.

For each $x \in \bH$, let the covector, $\tau_x \in \cE^*$, be defined via
\begin{equation}
   \tau_x \left( \left[ \nabla, \Phi \right] \right) \eqdef \tr \left( \fU_\nabla (x)^* \Phi \left( \pi (x) \right) \right).
\end{equation}

\smallskip

Finally, note that any vector $v$ on $\Sigma$ can be pulled back to a vector field $\widetilde{v}$ on $\bH$ using $\pi$.

\smallskip

We combine the above constructions to define the hyperbolic Bloch transform.

\begin{definition}[The hyperbolic Bloch transform]
   Let us define the map $\fB : C_\cpt^\infty \left( \bH \right) \rightarrow L^2 \left( \cE \right)$ such that for all $\uppsi \in C_\cpt^\infty \left( \bH \right)$, $\nablaeq \in \cM_\Gamma$, and $y \in \Sigma$, the section $\fB \left( \uppsi \right)$ satisfies
   \begin{equation}
      \fB \left( \uppsi \right) \left( \nablaeq \right) \left( y \right) \eqdef \left[ \nabla, \sum\limits_{x \in \pi^{- 1} \left( y \right)} \uppsi (x) \fU_\nabla (x) \right]. \label{eq:HBT_def}
   \end{equation}
\end{definition}

As before, the right-hand side of \cref{eq:HBT_def} is meaningful for smooth, compactly-supported function, $\uppsi$, and the ideas of the proof \Cref{theorem:asymptotic_almost_unitarity} again yield that, for each $n \in \N_+$, the map $\fB_n$, defined via $\uppsi \mapsto \fB_n \left( \uppsi \right) \eqdef \fB \left( \uppsi \right)|_{\cM_\Gamma^n} (\cdot)$ is a continuous linear map between the Banach spaces
\begin{equation}
   \left\{ \: \uppsi \in L^2 \left( \bH \right) \: \middle| \: \| \uppsi \| \eqdef \sum\limits_{\gamma \in \Gamma} \| \uppsi \|_{L^2 \left( \gamma (C) \right)} < \infty \: \right\} \cong \ell^1 \left( \Gamma \right) \otimes L^2 \left( C \right), \label{eq:Banach}
\end{equation}
and $L^2 \left( \cE \right)$. We prove this claim, and more, in \Cref{theorem:HBT} below.

When regarded as a densely defined operator from $L^2 \left( \bH \right)$ to $L^2 \left( \cE^n \right)$, the adjoint of $\fB_n$ is given by
\begin{equation}
   \fB_n^* \left( \widehat{\uppsi} \right) = \int\limits_{\cM_\Gamma^n} \tau \left( \widehat{\uppsi} \circ \pi \right) \rd \mu_n.
\end{equation}
More verbosely, if given $y \in \Sigma$ and choices of representatives $\nabla \in \nablaeq$, and we write $\widehat{\uppsi} \left( \nablaeq \right) \left( y \right) = \left[ \nabla, \Phi_\nabla \left( y \right) \right]$, then for all $x \in \bH$
\begin{equation}
   \fB_n^* \left( \widehat{\uppsi} \right) \left( x \right) = \int\limits_{\cM_\Gamma^n} \tr \left( \fU_\nabla (x)^* \Phi_\nabla \left( \pi (x) \right) \right) \rd \mu_n \left( \nablaeq \right).
\end{equation}

\smallskip

We call a vector field, $w$, on $\bH$, \emph{$\Gamma$-periodic}, if for all $x \in \bH$ and $\gamma \in \Gamma$, $\pi_* \left( w (x) \right) = \pi_* \left( w \left( \gamma (x) \right) \right) \in T_{\pi (x)} \Sigma$. If $w$ is $\Gamma$-periodic, then there is a well-defined and unique vector field, $v$, on $\Sigma$, such that if $\pi (x) = y$, then $\pi_* \left( w (x) \right) = v \left( y \right)$. Since $\pi$ is a covering map, this correspondence is an isomorphism between the vector space of $\Gamma$-periodic vector fields on $\bH$ and the vector space of vector fields on $\Sigma$.

Finally, let us define three operators. First, if $\left[ \nabla, \Phi \right] \in \cE_{\nablaeq}$ and $v$ is a vector field on $\Sigma$, then let
\begin{equation}
   \whnabla_v \left( \left[ \nabla, \Phi \right] \right) \eqdef \left[ \nabla, \nabla_v \Phi \right].
\end{equation}
Then $\whnabla_v$ induces a bundle map of $\cE$.

Second, let
\begin{equation}
   \whDelta \left( \left[ \nabla, \Phi \right] \right) \eqdef \left[ \nabla, \nabla^* \nabla \Phi \right].
\end{equation}
Then $\whDelta$ induces a bundle map of $\cE$ as well.

Third, for all $\nabla \in \cC_\Gamma^n$ and $\gamma \in \Gamma$, let $\Hol_{\nabla, \gamma}$ be the bundle map of $E^{(n)}$, defined as follows:
\begin{equation}
   \forall \varphi \in E^{(n)} : \quad \Hol_{\nabla, \gamma} \left( \varphi \right) \eqdef \varrho_\nabla \left( \gamma \right) \varphi
\end{equation}
Then $\Hol_{\cdot, \gamma}$ induces a bundle map, $\whHol_\gamma$, of $\cE$ as well.

\smallskip

Now we are ready to state and prove our second main result:

\begin{theorem}
   \label{theorem:HBT}
   For each $\uppsi \in L^2 \left( \bH \right)$ with $\| \uppsi \| < \infty$ (defined in \cref{eq:Banach}), the sequence, $\left( \fB_n^* \fB_n \left( \uppsi \right) \right)_{n \in \N_+}$, converges to $\uppsi$ in the topology of $L^\infty \left( \bH \right)$.

   Furthermore, if $v$ is a vector field on $\Sigma$ with induced $\Gamma$-periodic vector field $w$ on $\bH$, then
   \begin{equation}
      \fB \circ \rd_w = \whnabla_v \circ \fB, \label{eq:bloch_derivative}
   \end{equation}
   holds on $C_\cpt^\infty \left( \bH \right)$.

   Finally, we have that
   \begin{subequations}
   \begin{align}
      \fB \circ \Delta     &= \whDelta \circ \fB, \label{eq:Delta_commutator} \\
      \fB \circ T_\gamma   &= \whHol_\gamma \circ \fB, \label{eq:Hol_commutator}
   \end{align}
   \end{subequations}
   holds on $C_\cpt^\infty \left( \bH \right)$.
\end{theorem}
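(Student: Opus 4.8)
The plan is to follow the template of \Cref{theorem:asymptotic_almost_unitarity}, exploiting the fact that the geometric transform \eqref{eq:HBT_def} is engineered so that $\fB_n^* \fB_n$ collapses onto the very same expectation $\mathbb{E}_n$ from \eqref{eq:E_gn_def}. First I would establish the boundedness claimed just before the theorem, namely that $\fB_n$ is a norm-one operator from the Banach space in \eqref{eq:Banach} into $L^2 \left( \cE^n \right)$. Writing $\Phi_\nabla (y) = \sum_{x \in \pi^{- 1} (y)} \uppsi (x) \fU_\nabla (x)$, one expands $\int_{\cM_\Gamma^n} \int_\Sigma | \Phi_\nabla |^2 \vol \rd \mu_n$ exactly as in the lemma preceding \Cref{theorem:asymptotic_almost_unitarity}, with the fibrewise Hilbert--Schmidt norm in place of the matrix norm; the diagonal terms give $\| \uppsi \|_{L^2 \left( \bH \right)}^2$ (using $\tr \left( \fU_\nabla (x)^* \fU_\nabla (x) \right) = n$ and the $\tfrac{1}{n}$ normalization of $\mu_n$), while the off-diagonal cross terms integrate against characters and are dominated by the $\ell^1$-type norm of \eqref{eq:Banach}.

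The heart of the first assertion is the identity
\begin{equation}
   \fB_n^* \fB_n \left( \uppsi \right) \left( x \right) = \sum_{\gamma \in \Gamma} \uppsi \left( \gamma (x) \right) \mathbb{E}_n \left( \gamma \right).
\end{equation}
To obtain it I would insert $\widehat{\uppsi} = \fB_n \left( \uppsi \right)$ into the adjoint formula, use $\pi^{- 1} \left( \pi (x) \right) = \left\{ \gamma (x) \right\}_{\gamma \in \Gamma}$ together with the equivariance \eqref{eq:equivariance} to write $\fU_\nabla \left( \gamma (x) \right) = \varrho_\nabla \left( \gamma \right) \fU_\nabla (x)$, and then invoke the key geometric point: since $\fU_\nabla (x) \in \rU (n)$ is unitary, conjugation-invariance of the trace yields $\tr \left( \fU_\nabla (x)^* \varrho_\nabla \left( \gamma \right) \fU_\nabla (x) \right) = \tr \left( \varrho_\nabla \left( \gamma \right) \right) = \widehat{\gamma}_n \left( \rhoeq \right)$, independent of $x$. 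Integrating over $\cM_\Gamma^n$ produces $\mathbb{E}_n \left( \gamma \right)$. From here the argument is identical to that of \Cref{theorem:asymptotic_almost_unitarity}: subtract $\uppsi$, apply the bound $\left| \mathbb{E}_n \left( \gamma \right) - \delta_{\gamma, e} \right| \leqslant c (\gamma)/n$ coming from \eqref{eq:chi_expected_value}, and run Banach--Steinhaus to conclude convergence of $\fB_n^* \fB_n \left( \uppsi \right)$ to $\uppsi$ in $L^\infty \left( \bH \right)$ (understood, as before, as weak-$*$ convergence).

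For the intertwining relations I would argue directly. For \eqref{eq:bloch_derivative}, differentiate $\Phi_\nabla$ along the $\Gamma$-periodic lift $w$ of $v$ and apply the product rule termwise: the derivative hitting $\uppsi$ gives $\sum_x \left( \rd_w \uppsi \right) (x) \fU_\nabla (x)$, which is exactly the $\Phi$-component of $\fB \left( \rd_w \uppsi \right)$, whereas the derivative hitting $\fU_\nabla$ is governed by $\rd \fU_\nabla = - \pi^* \left( A_\nabla \right) \fU_\nabla$ -- that is, the columns of $\fU_\nabla$ are $\nabla$-parallel -- so this contribution is precisely the connection term $A_\nabla (v) \Phi_\nabla$ that $\nabla_v$ absorbs; hence $\nabla_v \Phi_\nabla = \sum_x \left( \rd_w \uppsi \right) (x) \fU_\nabla (x)$. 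Relation \eqref{eq:Delta_commutator} then follows by iteration: expand both $\Delta$ on $\bH$ and $\nabla^* \nabla$ on $\Sigma$ through a local orthonormal frame $\left\{ v_i \right\}$ of $T \Sigma$ with $\Gamma$-periodic lift $\left\{ w_i \right\}$, apply \eqref{eq:bloch_derivative} twice, and use that $\pi$ is a local isometry so the Levi--Civita correction terms match, $\pi_* \left( \nabla_{w_i} w_i \right) = \nabla_{v_i} v_i$. Finally, \eqref{eq:Hol_commutator} is a pure reindexing: in $\fB \left( T_\gamma \uppsi \right)$ reindex the fibre sum by $x = \gamma \left( x^\prime \right)$ (legitimate since fibres are $\Gamma$-invariant) and apply \eqref{eq:equivariance} to extract $\varrho_\nabla \left( \gamma \right)$ on the left, giving $\left[ \nabla, \varrho_\nabla \left( \gamma \right) \Phi_\nabla \right] = \whHol_\gamma \fB \left( \uppsi \right)$.

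The genuinely nontrivial input is, exactly as in the abstract case, the asymptotic expansion \eqref{eq:chi_expected_value} resting on Magee's results; once the reduction to $\mathbb{E}_n$ is secured the convergence is soft, so I expect the main obstacle to be the careful bookkeeping in that reduction. One must verify that unitarity of $\fU_\nabla (x)$ makes the integrand $x$-independent and reduces the fibrewise Hilbert--Schmidt pairing to the character, and, for \eqref{eq:bloch_derivative}--\eqref{eq:Delta_commutator}, that $\nabla_v$ differentiates only the $E^{(n)}$-valued (left) index of $\Phi$ -- the reference frame at $x_0$ furnishing the right index being constant -- so that the $\nabla$-parallel frame $\fU_\nabla$ cancels the connection term exactly rather than leaving a residual adjoint contribution. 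These steps are routine but must be tracked against the gauge conventions fixed around \eqref{eq:gauge_change_for_U}.
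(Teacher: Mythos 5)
Your proposal is correct and, for most of the theorem, follows the paper's route. For the first claim the paper only says that the proof of \Cref{theorem:asymptotic_almost_unitarity} ``can be adapted''; you make the adaptation explicit by deriving the reduction identity $\fB_n^* \fB_n \left( \uppsi \right) (x) = \sum_{\gamma \in \Gamma} \uppsi \left( \gamma (x) \right) \mathbb{E}_n \left( \gamma \right)$ from the equivariance \eqref{eq:equivariance} and the unitarity of $\fU_\nabla (x)$, which is exactly the point that makes the abstract argument (the $c(\gamma)/n$ bound from \eqref{eq:chi_expected_value} plus Banach--Steinhaus) carry over; this is a genuinely useful piece of bookkeeping that the paper suppresses. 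Your proofs of \eqref{eq:bloch_derivative} and \eqref{eq:Hol_commutator} coincide with the paper's (product rule with the parallel-frame equation $\rd \fU_\nabla = - \pi^* \left( A_\nabla \right) \fU_\nabla$ absorbing the connection term; reindexing the fibre sum). The one place you genuinely diverge is \eqref{eq:Delta_commutator}: the paper works in the Poincar\'e disk chart centred at $x_0$, introduces the Wigner--Seitz cell, a cutoff $\varphi$, and the conformal factor $r$ with $r \circ \pi = - \varphi \left( 1 - x_1^2 - x_2^2 \right)^2$, and writes $\Delta$ as $\left( r \circ \pi \right) \left( \rd_{w_1}^2 + \rd_{w_2}^2 \right)$ before applying \eqref{eq:bloch_derivative} twice; you instead expand $\Delta$ and $\nabla^* \nabla$ in a local orthonormal frame and match the Levi--Civita correction terms using that $\pi$ is a local isometry. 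Both reduce to two applications of \eqref{eq:bloch_derivative}; your frame-based version is coordinate-free and arguably cleaner, but note that \eqref{eq:bloch_derivative} as stated applies to globally defined vector fields on $\Sigma$, so a local orthonormal frame must still be cut off (or the identity observed to be local) --- this is precisely the role the paper's cutoff $\varphi$ and Wigner--Seitz cell play, and you should not omit that step entirely.
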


\begin{proof}
   The proof of \Cref{theorem:asymptotic_almost_unitarity} can be adapted to the first claim.

   By density, it is enough to show the second statement for smooth vector fields. Let $v$ be a smooth vector field on $\Sigma$ with induced $\Gamma$-periodic (smooth) vector field $w$ on $\bH$ and let $\uppsi$ be a smooth, compactly-supported function on $\bH$. Then for all $y \in \Sigma$, we have
   \begin{align}
      \left( \fB \circ \rd_w \right) \left( \uppsi \right) \left( \nablaeq \right) \left( y \right)  &= \fB \left( \rd_w \uppsi \right) \left( \nablaeq \right) \left( y \right) \\
         &= \left[ \nabla, \sum\limits_{x \in \pi^{- 1} \left( y \right)} \left( \rd_w \uppsi \right) (x) \fU_\nabla (x) \right] \\
         &= \left[ \nabla, \sum\limits_{x \in \pi^{- 1} \left( y \right)} \left( \rd_w \left( \uppsi \fU_\nabla \right) + \pi^* \left( A_\nabla \right) (w) \left( \uppsi \fU_\nabla \right) \right)(x) \right] \\
         &= \left[ \nabla, \sum\limits_{x \in \pi^{- 1} \left( y \right)} \left( \rd_w + A_\nabla \left( \pi_* (w) \right) \right) \left( \uppsi \fU_\nabla \right) (x) \right] \\
         &= \left[ \nabla, \left( \nabla_v^0 + A_\nabla \left( v \right) \right) \sum\limits_{x \in \pi^{- 1} \left( y \right)} \left( \uppsi \fU_\nabla \right) (x) \right] \\
         &= \left[ \nabla, \nabla_v \sum\limits_{x \in \pi^{- 1} \left( y \right)} \left( \uppsi \fU_\nabla \right) (x) \right] \\
         &= \left( \whnabla_v \circ \fB \right) \left( \uppsi \right),
   \end{align}
   which proves \cref{eq:bloch_derivative}.

   Now let $\uppsi$ still be a smooth, compactly-supported function on $\bH$. Fix $x_0 \in \cH$ and let $C$ be the Wigner--Seitz cell of $\Gamma$, centered at $x_0$, that is, if $\dist_\bH : \bH \times \bH \rightarrow \left[ 0 \infty \right)$ is the hyperbolic distance function, then
   \begin{equation}
      C_{x_0} \eqdef \left\{ \: x \in \bH \: \middle| \: \forall \gamma \in \Gamma - \{ e \} : \: \dist_\bH \left( x, x_0 \right) < \dist_\bH \left( x, \gamma \left( x_0 \right) \right) \: \right\}.
   \end{equation}
   By choosing a partition of unity for $\bH$, we can assume, without any loss of generality, that $\supp \left( \uppsi \right) \subset C_{x_0}$. Let $\varphi : \bH \rightarrow [0, 1]$ be a smooth cutoff function, just that $\supp \left( \uppsi \right) \subseteq \varphi^{- 1} (1)$ and $\supp \left( \varphi \right) \subset C_{x_0}$. Let $\left( x_1, x_2 \right) : \bH \rightarrow \mathbbm{D}$ be the global chart for $\bH$ coming from the Poincar\'e disk model, centered at $x_0$, and let $\partial_1$ and $\partial_2$ be the corresponding vector fields. Then $\Delta = - \left( 1 - x_1^2 - x_2^2 \right)^2 \left( \partial_1^2 + \partial_2^2 \right)$. Next we define a $\Gamma$-periodic (smooth) vector fields $w_1$ and $w_2$ as follows: If $x \in \gamma \left( C_{x_0} \right)$, then $w_i (x) = \varphi \left( \gamma^{- 1} (x) \right) \partial_i (x)$, and zero otherwise. Let $v$ be the induced Note that $\pi \left( C_{x_0} \right)$ is open and dense in $\Sigma$, so let $r$ be a smooth function on $\Sigma$, such that $r \circ \pi = - \varphi \left( 1 - x_1^2 - x_2^2 \right)^2$ holds on $C_{x_0}$ (such an $r$ exists and unique). Then we have that
   \begin{equation}
      \Delta \uppsi = \left( r \circ \pi \right) \left( \rd_{w_1}^2 + \rd_{w_2}^2 \right) \uppsi,
   \end{equation}
   and if $\left[ \nabla, \Phi \right] \in \cE_{\nablaeq}^n$ so that $\supp \left( \Phi \right) \subset \pi \left( \varphi^{- 1} (1) \right)$, then
   \begin{equation}
      \whDelta \left[ \nabla, \Phi \right] = r \left( \whnabla_{v_1}^2 + \whnabla_{v_2}^2 \right) \left[ \nabla, \Phi \right] = \left[ \nabla, r \left( \nabla_{v_1}^2 + \nabla_{v_2}^2 \right) \Phi \right].
   \end{equation}
   Thus for all $y \in \Sigma$ and $\nablaeq \in \cM_\Gamma$, using \cref{eq:bloch_derivative}, we have
   \begin{align}
      \left( \fB \circ \Delta \right) \left( \uppsi \right) \left( \nablaeq \right) \left( y \right) &= \left[ \nabla, \sum\limits_{x \in \pi^{- 1} \left( y \right)} r \left( y \right) \left( \left( \rd_{w_1}^2 + \rd_{w_2}^2 \right) \uppsi \right) (x) \fU_\nabla (x) \right] \\
         &= \left[ \nabla, r \left( y \right) \left( \nabla_{v_1}^2 + \nabla_{v_2}^2 \right) \sum\limits_{x \in \pi^{- 1} \left( y \right)} \uppsi (x) \fU_\nabla (x) \right] \\
         &= r \left( \whnabla_{v_1}^2 + \whnabla_{v_2}^2 \right) \fB \left( \uppsi \right) \left( \nablaeq \right) \left( y \right) \\
         &= \left( \whDelta \fB \left( \uppsi \right) \right) \left( \nablaeq \right) \left( y \right),
   \end{align}
   which proves \cref{eq:Delta_commutator}.

   The proof of \cref{eq:Hol_commutator} is straightforward from the definitions.
\end{proof}

\medskip

\subsection{Periodic magnetic fields and twisted bundles}
\label{sec:twisted}

Let us assume that a periodic magnetic field is present on $\bH$. More precisely, let $a$ be an imaginary valued 1-form, such that for all $\gamma \in \Gamma$ we have that
\begin{equation}
   T_\Gamma^* (a) = a + u_\gamma \rd \left( u_\gamma \right),
\end{equation}
for some $u_\gamma : \Gamma \rightarrow \rU (1)$. In other words, the connection $D \eqdef \rd + a$ is gauge equivalent to the pullback of a unitary connection, $\nabla^\cL$, on a Hermitian line bundle, $\cL \rightarrow \Sigma$. In particular, if $F_D$ and $F_{\nabla^\cL}$ are the curvature 2-forms of $D$ and $\nabla^\cL$, respectively, then by Chern--Weil theory, for any fundamental cell, $C \subset \bH$ of $\Gamma$, we have
\begin{equation}
   \textsc{flux} \eqdef \tfrac{1}{2 \pi i} \int\limits_C F_D = \tfrac{1}{2 \pi i} \int\limits_C \rd a = \tfrac{1}{2 \pi i} \int\limits_\Sigma F_{\nabla^\cL} \in \Z.
\end{equation}
Let $\varphi$ be an isomorphism of the Hermitian bundles $\underline{\cx}$ and $\pi^* \left( \cL \right)$, such that $D = \varphi^{- 1} \circ \nabla^\cL \circ \varphi$.

With this in mind, for all $\nabla \in \cC_\Gamma^n$, we let
\begin{equation}
   \widetilde{\cE}_\nabla^{n, \cL} \eqdef L^2 \left( \Sigma, \End \left( E^{(n)} \right) \otimes \cL \right),
\end{equation}
and $\cE^{n, \cL} \eqdef \widetilde{\cE}^{n, \cL} / \rU (n)$, which is again a Hilbert bundle over $\cM_\Gamma^n$. Now let us define the \emph{twisted hyperbolic Bloch transform} via
\begin{equation}
   \fB^\cL \left( \uppsi \right) \left( \nablaeq \right) \left( y \right) \eqdef \left[ \nabla, \sum\limits_{x \in \pi^{- 1} \left( y \right)} \varphi (x) \left( \uppsi (x) \right) \fU_\nabla (x) \right].
\end{equation}
It is easy to verify that, when restricted to $\cM_\Gamma^n$, $\fB^\cL \left( \uppsi \right)$ takes values in $\cE^{n, \cL}$ and the results of \Cref{theorem:HBT} can be proven with $\rd$ replaced by $D$.

\begin{remark}
   Similarly, given any finite rank Hermitian vector bundle, $(V, h)$, over $\Sigma$, and unitary connection, $\nabla^V$, on $V$, one can twist the Bloch transform by $\left( V, h, \nabla^V \right)$, and prove a new version of \Cref{theorem:HBT}. We also remark that $\widetilde{\cE}_\nabla^{n, \cL}$ is a (square integrable) \emph{Higgs field} for the bundle $E^{(n)}$. This appearance of Higgs bundles in the noncommutative Bloch transform adds another direction for the application of Higgs bundles to hyperbolic band theory as in \cite{KR22}.
\end{remark}

   %========================
   \bibliography{references}
   %========================

\end{document}